\newtheorem{proposition}{Proposition}
\newcommand{\bra}[1]{{\left\langle{#1}\right\vert}}
\newcommand{\ket}[1]{{\left\vert{#1}\right\rangle}}
\newcommand{\qw}[1][-1]{\ar @{-} [0,#1]}
\newcommand{\qwx}[1][-1]{\ar @{-} [#1,0]}
\newcommand{\gate}[1]{*+<.6em>{#1} \POS ="i","i"+UR;"i"+UL **\dir{-};"i"+DL **\dir{-};"i"+DR **\dir{-};"i"+UR **\dir{-},"i" \qw}
\newcommand{\control}{*!<0em,.025em>-=-<.2em>{\bullet}}
\newcommand{\controlo}{*+<.01em>{\xy -<.095em>*\xycircle<.19em>{} \endxy}}
\newcommand{\ctrl}[1]{\control \qwx[#1] \qw}
\newcommand{\ctrlo}[1]{\controlo \qwx[#1] \qw}
\newcommand{\targ}{*+<.02em,.02em>{\xy ="i","i"-<.39em,0em>;"i"+<.39em,0em> **\dir{-}, "i"-<0em,.39em>;"i"+<0em,.39em> **\dir{-},"i"*\xycircle<.4em>{} \endxy} \qw}
\newcommand{\multigate}[2]{*+<1em,.9em>{\hphantom{#2}} \POS [0,0]="i",[0,0].[#1,0]="e",!C *{#2},"e"+UR;"e"+UL **\dir{-};"e"+DL **\dir{-};"e"+DR **\dir{-};"e"+UR **\dir{-},"i" \qw}
\newcommand{\ghost}[1]{*+<1em,.9em>{\hphantom{#1}} \qw}
\newcommand{\gategroup}[6]{\POS"#1,#2"."#3,#2"."#1,#4"."#3,#4"!C*+<#5>\frm{#6}}
\newcommand{\lstick}[1]{*!R!<.5em,0em>=<0em>{#1}}
\newcommand{\Qcircuit}{\xymatrix @*=<0em>}
\title{Experimental Implementation of Quantum Walks on\\ IBM Quantum Computers}
\author{F. Acasiete$^{1}$, F. P. Agostini$^{2}$, J. Khatibi Moqadam$^{1}$, and R. Portugal$^{1}$\\
{\small
$^1$National Laboratory of Scientific Computing - LNCC, \\
Av. Get{\'u}lio Vargas 333, Petr{\'o}polis, RJ, 25651-075, Brazil \\
\vspace{5pt}
}
{\small
$^2$National Institute of Metrology, Quality, and Technology - Inmetro, \\
Av. Nossa Senhora das Gra\c{c}as 50, Duque de Caxias, RJ, 25250-020, Brazil \\
} 
} 
\begin{document}
\maketitle

\begin{abstract}
The development of universal quantum computers has achieved remarkable success in recent years, culminating with the quantum supremacy reported by Google. Now is possible to implement short-depth quantum circuits with dozens of qubits and to obtain results with significant fidelity. Quantum walks are good candidates to be implemented on the available quantum computers.  In this work, we implement discrete-time quantum walks with one and two interacting walkers on cycles, two-dimensional lattices, and complete graphs on IBM quantum computers. We are able to obtain meaningful results using the cycle, the two-dimensional lattice, and the complete graph with 16 nodes each, which require 4-qubit quantum circuits up to depth 100.

\

\noindent
Keywords: Quantum computing, quantum walks, quantum circuits, IBM Q Experience, Qiskit

\end{abstract}

\section{Introduction}

Quantum walks are considered the quantum analogue of classical walks, which are useful for developing classical randomized algorithms~\cite{MR96}. Quantum walks have already proved useful for designing quantum algorithms~\cite{Amb07a}. The most general definition of a quantum walk on a graph demands that its time evolution obey the laws of quantum mechanics and is constrained by graph locality~\cite{Por18book}.

A quantum computer that can realize computational tasks that the largest supercomputers available nowadays cannot simulate has been recently built~\cite{GoogleQC}, opening a broad highway to implement quantum walks efficiently. The implementation of quantum walks on quantum computers requires $\log_2 N$ qubits, where $N$ is the number of nodes of the graph around which the walkers ramble. On the other hand, in many forms of implementing quantum walk in laboratories, the number of devices the experimenter puts on the table scales with the number of nodes of the graph~\cite{DW09}, though the required resources are decreased by resorting to classical realization of quantum walks with optical systems~\cite{Regensburger2011photon,Schetal12,LMNPGBJS19}.
Quantum computers provide us with an exponential advantage, however, indicating by the noisy intermediate-scale quantum (NISQ) era, the available quantum computers are prone to errors above the threshold required to implement quantum error correcting codes \cite{Preskill2018quantum}. That obviously limits the depth of the circuits that would simulate the quantum walk reasonably well.

The time evolution of quantum walks on graphs can be continuous or discrete~\cite{ADZ93,FG98}. In the discrete-time case, the very first model~\cite{ADZ93}, originally called ``random quantum walks'', nowadays known as coined quantum walks, has an internal coin space, which was considered mandatory for many years until two alternative restricted coinless models were proposed: Szegedy's~\cite{Sze04a} and Patel et.~al's~\cite{PRR05}. Szegedy's model is defined on bipartite graphs, and Patel et.~al's model on hypercubic lattices. These two models are particular cases of the staggered quantum walk model~\cite{PSFG16}, which is defined on arbitrary graphs. To define discrete-time quantum walks on arbitrary graphs without resorting to internal spaces, the number of local unitary operators, the product of which is the evolution operator, must be larger than two depending on the graph type because locality demands that the vertex set must be partitioned into cliques, inevitably leading to the notion of graph tessellation cover, and to the results proved in~\cite{ACFKMPP20}. For instance, the evolution operator of discrete-time quantum walk on two-dimensional lattices must be the product of at least four local operators~\cite{PF17}. 

In this work, we implement staggered quantum walks (SQWs)~\cite{PSFG16} on cycles, two-dimensional lattices with cyclic boundary conditions, and complete graph on IBM quantum computers. The evolution operator of a SQW on a graph is obtained using a graph tessellation cover, where a tessellation is a partition of the vertex set into cliques and a tessellation cover is a set of tessellations that covers all the edges of the graph. When implementing on IBM quantum computers, the evolution operator must be decomposed in terms of basic gates, CNOT and 1-qubit rotations. The most important gate in quantum walk implementations is the multi-controlled Toffoli gate, whose decomposition has been widely studied~\cite{HLZWW17,LL16,NC00}. In our case, we use an alternative version to this gate to shorten its decomposition, which is crucial in NISQ systems. We are able to implement one step of the quantum walk on a 16-node cycle, two steps of two interacting walkers on a 4-node cycle, and one step of the quantum walk on a 16-node two-dimensional lattice with cyclic borders. Those results improve earlier attempts using IBM quantum computers~\cite{BCS18,GZ19,Sha19} and are comparable with the size of cycles used in direct laboratory experiments~\cite{matjeschk2012experimental,flurin2017observing,dadras2019experimental}. We have also implemented quantum walk-based search algorithms on complete graphs with 8 and 16 vertices, which are more efficient that their classical versions in terms of oracle call. As far as we know, it seems that ours is the first implementation of a modified version of Grover's algorithm with four qubits on public access quantum computers with high fidelity (72.1\%) and faster than classical random search algorithms (see also the discussion at the concluding section of~\cite{SOM20}).

The structure of this paper is as follows. Sec.~\ref{sec:cycle} describes the dynamics of one walker and two interacting walkers on the $N$-cycle and their implementation on IBM quantum computers using 4 qubits. Sec.~\ref{sec:grid} describes the dynamics of one walker on a $N$-torus (cyclic two-dimensional lattice) and its implementation on IBM quantum computers using 4 qubits. Sec.~\ref{sec:search} presents the implementation of quantum walk-based search algorithms on complete graphs. Sec.~\ref{sec:conc} describes our conclusions.



\section{Quantum walk on the cycle}\label{sec:cycle}




Consider a $N$-cycle whose vertices are labeled by $0, \ldots, N-1$ and assume that $N$ is even.  A tessellation cover $\{\mathcal{T}_\alpha,\mathcal{T}_\beta\}$ for this graph is depicted in Fig.~\ref{fig:non-uniform-tiles-N=8}, where $\mathcal{T}_\alpha=\{\alpha_x:0\le x\le N/2-1\}$, $\mathcal{T}_\beta=\{\beta_x:0\le x\le N/2-1\}$, $\alpha_x=\{2x,2x+1\}$, and $\beta_x=\{2x+1,2x+2\}$.  The arithmetic is performed modulo $N$. 
Each vertex $v$ is associated with a canonical basis vector $\ket{v}$ in a Hilbert space $\mathscr{H}^{N}$, whose computational basis is $\{\ket{x}:x=0,\ldots,N-1 \}$. Each tile $\alpha_x$ ($\beta_x$) of tessellation $\mathcal{T}_\alpha$ ($\mathcal{T}_\beta$) is associated with a unit vector $\ket{\alpha_x}$ ($\ket{\beta_x}$) in $\mathscr{H}^{N}$ as follows
\begin{align}
\left|\alpha_{x}\right\rangle  &= \frac{\ket{2x}+\ket{2x+1}}{\sqrt{2}}, \label{eq:sm_alpha_latt} \\
\left|\beta_{x}\right\rangle   &= \frac{\ket{2x+1}+\ket{2x+2}}{\sqrt{2}}. \label{eq:sm_beta_latt}
\end{align}
Using these vectors, we define projectors $\sum_x\ket{\alpha_x}\bra{\alpha_x}$ and  $\sum_x\ket{\beta_x}\bra{\beta_x}$, which allow us to define the following Hermitian and unitary operators:
\begin{align}
H_0 &= 2\sum_{x=0}^{{N}/{2}-1} \ket{\alpha_x}\bra{\alpha_x}-\mathds{I}, \label{eq:sm_H0} \\
H_1 &= 2\sum_{x=0}^{{N}/{2}-1} \ket{\beta_x}\bra{\beta_x}-\mathds{I}, \label{eq:sm_H1}
\end{align}
where $\mathds{I}$ is the identity operator in $\mathscr{H}^N$, whose dimension should be clear from the context.
The evolution operator for the SQW on the cycle is given by
\begin{equation}
	{{U}}=\textrm{e}^{{-i}\theta H_1}\textrm{e}^{{-i}\theta H_0},\label{eq:unit}
\end{equation}
where $\theta$ is an angle~\cite{POM17}. The quantum walk dynamics is generated by repeatedly applying the evolution operator, at discrete time steps, on an initial state.

\begin{figure}[h!]
\centering
\begingroup%
  \makeatletter%
  \providecommand\color[2][]{%
    \errmessage{(Inkscape) Color is used for the text in Inkscape, but the package 'color.sty' is not loaded}%
    \renewcommand\color[2][]{}%
  }%
  \providecommand\transparent[1]{%
    \errmessage{(Inkscape) Transparency is used (non-zero) for the text in Inkscape, but the package 'transparent.sty' is not loaded}%
    \renewcommand\transparent[1]{}%
  }%
  \providecommand\rotatebox[2]{#2}%
  \newcommand*\fsize{\dimexpr\f@size pt\relax}%
  \newcommand*\lineheight[1]{\fontsize{\fsize}{#1\fsize}\selectfont}%
  \ifx\svgwidth\undefined%
    \setlength{\unitlength}{140.74441144bp}%
    \ifx\svgscale\undefined%
      \relax%
    \else%
      \setlength{\unitlength}{\unitlength * \real{\svgscale}}%
    \fi%
  \else%
    \setlength{\unitlength}{\svgwidth}%
  \fi%
  \global\let\svgwidth\undefined%
  \global\let\svgscale\undefined%
  \makeatother%
  \begin{picture}(1,0.96747633)%
    \lineheight{1}%
    \setlength\tabcolsep{0pt}%
    \put(0,0){\includegraphics[width=\unitlength,page=1]{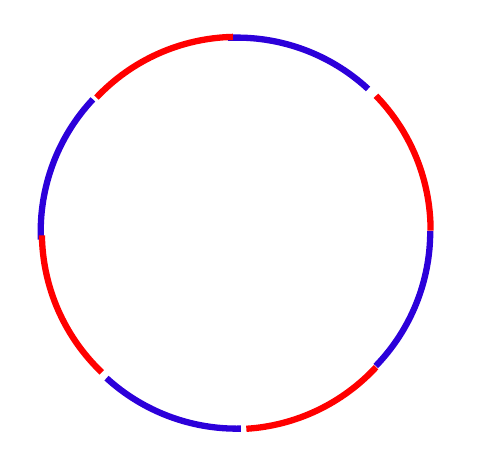}}%
    \put(0.631233,0.89421971){\color[rgb]{0,0,0}\makebox(0,0)[lt]{\lineheight{1.25}\smash{\begin{tabular}[t]{l}$\ket{+}$\end{tabular}}}}%
    \put(0.87961914,0.32378456){\color[rgb]{0,0,0}\makebox(0,0)[lt]{\lineheight{1.25}\smash{\begin{tabular}[t]{l}$\ket{+}$\end{tabular}}}}%
    \put(0.2526378,0.04292616){\color[rgb]{0,0,0}\makebox(0,0)[lt]{\lineheight{1.25}\smash{\begin{tabular}[t]{l}$\ket{+}$\end{tabular}}}}%
    \put(-0.01180225,0.66087856){\color[rgb]{0,0,0}\makebox(0,0)[lt]{\lineheight{1.25}\smash{\begin{tabular}[t]{l}$\ket{+}$\end{tabular}}}}%
    \put(0.43631638,0.925017){\color[rgb]{0,0,0}\makebox(0,0)[lt]{\lineheight{1.25}\smash{\begin{tabular}[t]{l}$0$\end{tabular}}}}%
    \put(0.7701803,0.80462604){\color[rgb]{0,0,0}\makebox(0,0)[lt]{\lineheight{1.25}\smash{\begin{tabular}[t]{l}$1$\end{tabular}}}}%
    \put(0.910835,0.48451839){\color[rgb]{0,0,0}\makebox(0,0)[lt]{\lineheight{1.25}\smash{\begin{tabular}[t]{l}$2$\end{tabular}}}}%
    \put(0.7921422,0.15022061){\color[rgb]{0,0,0}\makebox(0,0)[lt]{\lineheight{1.25}\smash{\begin{tabular}[t]{l}$3$\end{tabular}}}}%
    \put(0.46143672,0.0056781){\color[rgb]{0,0,0}\makebox(0,0)[lt]{\lineheight{1.25}\smash{\begin{tabular}[t]{l}$4$\end{tabular}}}}%
    \put(0.14498181,0.1509273){\color[rgb]{0,0,0}\makebox(0,0)[lt]{\lineheight{1.25}\smash{\begin{tabular}[t]{l}$5$\end{tabular}}}}%
    \put(0.01351133,0.46679581){\color[rgb]{0,0,0}\makebox(0,0)[lt]{\lineheight{1.25}\smash{\begin{tabular}[t]{l}$6$\end{tabular}}}}%
    \put(0.14780754,0.78407739){\color[rgb]{0,0,0}\makebox(0,0)[lt]{\lineheight{1.25}\smash{\begin{tabular}[t]{l}$7$\end{tabular}}}}%
    \put(0,0){\includegraphics[width=\unitlength,page=2]{tessellation1D.pdf}}%
    \put(0.86906543,0.66536013){\color[rgb]{0,0,0}\makebox(0,0)[lt]{\lineheight{1.25}\smash{\begin{tabular}[t]{l}$\ket{+}$\end{tabular}}}}%
    \put(0.649684,0.05917712){\color[rgb]{0,0,0}\makebox(0,0)[lt]{\lineheight{1.25}\smash{\begin{tabular}[t]{l}$\ket{+}$\end{tabular}}}}%
    \put(0.00830654,0.28088969){\color[rgb]{0,0,0}\makebox(0,0)[lt]{\lineheight{1.25}\smash{\begin{tabular}[t]{l}$\ket{+}$\end{tabular}}}}%
    \put(0.22146999,0.89639859){\color[rgb]{0,0,0}\makebox(0,0)[lt]{\lineheight{1.25}\smash{\begin{tabular}[t]{l}$\ket{+}$\end{tabular}}}}%
    \put(0,0){\includegraphics[width=\unitlength,page=3]{tessellation1D.pdf}}%
  \end{picture}%
\endgroup%
\caption{Tessellation cover of the 8-cycle showing the vectors associated with each tile $\{v,w\}$, where $\ket{+}=(\ket{v}+\ket{w})/\sqrt 2$.}
\label{fig:non-uniform-tiles-N=8}
\end{figure}

In matrix form, operators $H_0$ and $H_1$ are given by
\begin{align}
H_0 &= \mathds{I} \otimes X,\label{eq:sm_H0_matrix} \\
H_1 &= \begin{bmatrix} 
                    0          &           \bf{}                       &     1      \\ 
                 \bf{}         &     \mathds{I} \otimes X      &   \bf{}    \\
                    1          &           \bf{}                       &     0
\end{bmatrix},\label{eq:sm_H1_matrix}
\end{align}
where $X=\left(\begin{smallmatrix}0&1 \\1 & 0 \end{smallmatrix} \right)$ and the empty entries are 0. Using that
\begin{equation}\label{eq:R_x}
R_x(\theta)=\exp(-i\theta X/2) = \begin {bmatrix} \cos \frac{\theta}{2} &-i\sin \frac{\theta}{2} \\ 
\noalign{\medskip}-i\sin \frac{\theta}{2} &\cos \frac{\theta}{2} 
\end {bmatrix},
\end{equation}
we obtain the evolution generated by $H_0$ and $H_1$, respectively, as
\begin{equation}\label{eq:U_0}
{{U}}_0 = \mathds{I} \otimes R_x(2\theta),
\end{equation}
which is a block-diagonal matrix, and
\begin{equation}\label{eq:U_1}
{{U}}_1 = \begin{bmatrix} 
\cos\theta   &                                  & -i\sin\theta       \\ 
     &      \mathds{I} \otimes R_x(2\theta)   & \           \\
-i\sin\theta &                                &  \cos\theta
\end{bmatrix},
\end{equation}
which is a permutation of the rows and columns of ${{U}}_0$.
In fact, the permutation (and circulant) matrix
\begin{equation}\label{eq:permutation}
P \,=\,  
\sum_x \ket{x+1}\bra{x}
\,=\,
\begin{bmatrix} 
0  &      &          &         & 1  \\
1  &   0  &          &         &    \\ 
   &   1  &          &         &    \\
   &      &   \ddots &  \ddots &    \\
0  &      &          & 1       & 0
\end{bmatrix},
\end{equation}
transforms ${{U}}_0$ to ${{U}}_1$ via the similarity transformation ${{U}}_1=P^{-1} {{U}}_0 P$. 
The SQW evolution operator can, therefore, be written as ${{U}}=P^{-1}{{U}}_0P\,{{U}}_0$. $P$ shifts the walker to the right and $P^{-1}$ shifts to the left. This dynamics is similar to the split-step protocol introduced by Kitagawa \textit{et. al}~\cite{KRBD10}, whose implementation using photonic technology was decribed in~\cite{KBFRBKADW12}. In the split-step protocol, the shift to the right (left) occurs only if the particle spin is up (down) otherwise the particle stays put. In the staggered dynamics, the spin plays no role and the shift to the right or left is unconditional.

In the staggered model, the unit vectors that are associated with the tiles can be different from the one described by Eqs.~(\ref{eq:sm_alpha_latt}) and~(\ref{eq:sm_beta_latt}). In this case, the new local evolution operators ${{U}}_0$ and ${{U}}_1$ have the same structure described in Eqs.~(\ref{eq:U_0}) and~(\ref{eq:U_1}), but they use new $2\times 2$ matrices in place of $R_x(2\theta)$. For instance, if the unit vector associated with the first tile of tessellation ${\mathcal T}_\beta$ is $(\ket{1}\pm i\ket{2})/\sqrt 2$, the corresponding block $R_x(2\theta)$ in the expression of ${{U}}_1$ is replaced by $\pm R_y(2\theta)=\pm\left(\begin{smallmatrix}\cos\theta&-\sin\theta \\ \sin\theta & \cos\theta \end{smallmatrix} \right)$. We use those tiles to shorten the decomposition of ${{U}}_1$ in terms of basic gates.

\subsubsection*{Two interacting quantum walkers}

Let us address the dynamics of a 2-particle quantum walk on a cycle with a special type of interaction between the walkers. The evolution operator of two independent quantum walks on a cycle is the tensor product 
$${{U}}_\text{free}=\left({{U}}^{(1)}_1 {{U}}^{(1)}_0\right) \otimes \left({{U}}^{(2)}_1 {{U}}^{(2)}_0\right)$$
of two 1-particle quantum walks. The resulting operator belongs to the Hilbert space $\mathscr{H}^{N} \otimes\mathscr{H}^{N}$.

Now, suppose the walkers interact when they are simultaneously at the same vertex of the cycle, and consider the interaction described by a phase shift $\phi$ on top of the free evolution operator. The modified evolution operator is
\begin{equation}
{{U}} = {{U}}_\text{free} \, R,
\end{equation}
where
\begin{equation}
\label{eq:interaction_operator}
R \,\ket{x_1}\ket{x_2} = \begin{cases} 
      e^{i\phi}\ket{x_1}\ket{x_2}, &\text{if}\;\;\; x_1=x_2,\\
      \ket{x_1}\ket{x_2}, &\text{otherwise.} 
                         \end{cases}
\end{equation} 
$R$ is a diagonal matrix, whose diagonal entries are either 1 or $e^{i\phi}$.

An alternative interacting model, similar to the one used when designing quantum search algorithms on graphs, is
\begin{equation}
\label{eq:interaction_operator_2}
R \,\ket{x_1}\ket{x_2} = \begin{cases} 
      e^{i\phi}\ket{x_1}\ket{x_2}, &\text{if}\;\;\; x_1=x_2=x^0,\\
      \ket{x_1}\ket{x_2}, &\text{otherwise,} 
                         \end{cases}
\end{equation} 
where $x^0$ is a marked vertex. The decomposition of operator $R$ in the alternative model in terms of basic gates is shorter than the original one.

\subsection{Decomposition of the evolution operator}\label{sec:decomp}

In this section, we present the decomposition of the SQW evolution operators assuming that $N=2^n$ for an integer $n$. The Hilbert space $\mathscr{H}^N$ is spanned by the computational basis of $n$ qubits. 
Note that each vertex of the cycle is represented by a computational basis vector $\ket{q_0\ldots q_{n-1}}$, where $q_i$ are qubits.

\subsubsection*{Decomposition of the permutation matrix $P$}

The matrix representation of the operator ${{U}}_0$, given by Eq.~(\ref{eq:U_0}), has the decomposition ${{U}}_0=\mathds{I}_2^{\otimes n-1} \otimes R_x(2\theta)$, that is, $(n-1)$ $2\times 2$-identity operators acting on the first $n-1$ qubits and a rotation $R_x(2\theta)$ on the last qubit (see the central part of Fig.~\ref{fig:U_1}). Fig.~\ref{fig:U_1} also depicts the circuit that implements ${{U}}_1$. The circuit of $P$ is shown at the left-hand part and its inverse $P^{-1}$ at the right-hand part.

\begin{figure}[h!]
$$
\Qcircuit @C=0.9em @R=0.8em { 
\lstick{\scriptstyle q_0}& \targ    &  \qw      & \qw       & \qw                   & \qw                             & \qw                   &  \qw      & \qw       & \targ     & \qw \\
\lstick{\scriptstyle q_1}& \ctrl{-1}& \targ     & \qw       & \qw                   & \qw                             & \qw                   &  \qw      & \targ     & \ctrl{-1} & \qw \\ 
\lstick{\scriptstyle q_2}& \ctrl{-1}& \ctrl{-1} & \targ     & \qw                   & \qw                             & \qw                   & \targ     & \ctrl{-1} & \ctrl{-1} & \qw \\
\lstick{\scriptstyle q_3}& \ctrl{-1}& \ctrl{-1} & \ctrl{-1} & \gate{\scriptstyle X} &\gate{\scriptstyle R_x(2\theta)} & \gate{\scriptstyle X} & \ctrl{-1} & \ctrl{-1} & \ctrl{-1} & \qw  \\
& &\hspace{0.8cm}\scriptstyle P \gategroup{1}{2}{4}{5}{0.8em}{--}& & &\scriptstyle {{U}}_0 \gategroup{1}{6}{4}{6}{0.4em}{--}& &\hspace{1cm}\scriptstyle P^{-1} \hspace{0.3cm} \gategroup{1}{7}{4}{10}{0.8em}{--}& & &
}  
$$
\caption{Circuit of the operator ${{U}}_1$, including $P$, ${{U}}_0$, and $P^{-1}$, for a quantum walk on a cycle with 16 vertices (4 qubits).}
\label{fig:U_1}
\end{figure}
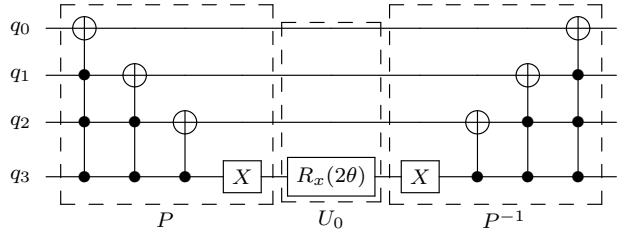

The decomposition of matrix $P$ uses multi-controlled Toffoli gates~\cite{DW09}, which are defined in the following way. Suppose that $C_{i_1,i_2,\ldots}(X_j)$ represents a multi-controlled Toffoli gate with control qubits ${q_{i_1}},{q_{i_2}},\ldots$ and the target qubit ${q_j}$. The action of $C_{i_1,i_2,\ldots}(X_j)$ is nontrivial only on $\ket{q_j}$, which is given by
\begin{equation}
\label{eq:generalized_Toff}
C_{i_1,i_2,\ldots}(X_j)\ket{q_{i_1},q_{i_2}...}\ket{q_j}=\ket{q_{i_1},q_{i_2}...}X^{q_{i_1}\cdot q_{i_2} \cdots}\ket{q_j}=\ket{q_{i_1},q_{i_2}...}\ket{q_j\oplus (q_{i_1}\cdot q_{i_2}\cdots)},
\end{equation}
that is, the state of qubit ${q_j}$ changes only if $q_{i_1}$, $q_{i_2},\ldots$ are all set to 1. Note that, in Fig.~\ref{fig:U_1}, the first (top) qubit of the circuit is ${q_0}$ and the last (bottom) one  is ${q_{n-1}}$. The correctness proof of this decomposition is shown in Appendix~\ref{appen:proof}.

\subsubsection*{Decomposition of the multi-controlled Toffoli gate}

To decompose the multi-controlled Toffoli gate $C_{0,...,n-2}(X_{n-1})$, which has $n-1$ control qubits $q_0$, ..., $q_{n-2}$ and one target qubit $q_{n-1}$, we initially use the identity 
$$C_{0,...,n-2}(X_{n-1})=H_{n-1} C_{0,...,n-2}(Z_{n-1})H_{n-1},$$ 
where $H_{n-1}$ is the Hadamard gate acting on qubit $q_{n-1}$, and then we focus on the method to decompose $C_{0,...,n-2}(Z_{n-1})$. Fig.~\ref{fig:economicdecomp} shows how to decompose  $C_{0,...,n-2}(Z_{n-1})$ in terms of a sequence of multi-controlled $R_z(\theta)$,
\begin{equation}\label{eq:R_z}
R_z(\theta)=\exp(-i\theta Z/2) = 
\begin {bmatrix} \text{e}^{-i\theta/2} &0 \\ 
\noalign{\medskip}0 & \text{e}^{i\theta/2}
\end {bmatrix},
\end{equation}
where $\theta=\pi/2^{n-j}$, $j=1,...,n$.

\begin{figure}[h!]
\[
\Qcircuit @C=0.9em @R=0.6em { 
\lstick{\scriptstyle q_0}      & \ctrl{1}               & \qw &        && &\gate{\scriptstyle R_z\big(\frac{\pi}{2^{n-1}}\big)}&\ctrl{1} &\ctrl{1} &\qw&          &&&\ctrl{1}  &\qw \\
\lstick{\scriptstyle q_1}      & \ctrl{1}               & \qw &        && &\qw      &\gate{\scriptstyle R_z\big(\frac{\pi}{2^{n-2}}\big)}&\ctrl{1} &\qw&          &&&\ctrl{1}  &\qw \\ 
\lstick{\scriptstyle q_2}      & \ctrl{1}               & \qw & \equiv && &\qw      &\qw      &\gate{\scriptstyle R_z\big(\frac{\pi}{2^{n-3}}\big)}&\qw&          &&&\ctrl{1}  &\qw \\
                               &                        &     &        && &         &         &                                                    &   &          &&&          &    \\
                               & \vdots                 &     &        && &         &\vdots   &                                                    &   & {\ddots} &&& \vdots   &    \\ 
                               &                        &     &        && &         &         &                                                    &   &          &&&          &    \\
                               & \ctrl{1}               & \qw &        && &         &         &                                                    &   &          &&& \ctrl{1} &\qw \\
\lstick{\scriptstyle q_{n-1}}  & \gate{\scriptstyle Z}  & \qw & && &\qw &\qw &\qw &\qw& &&&\gate{\scriptstyle R_z\big(\frac{\pi}{2^0}\big)}&\qw \gategroup{1}{14}{8}{14}{0.8em}{--} \\
}
\]
\caption{Decomposition of $C_{0,...,n-2}(Z_{n-1})$.}
\label{fig:economicdecomp}
\end{figure}
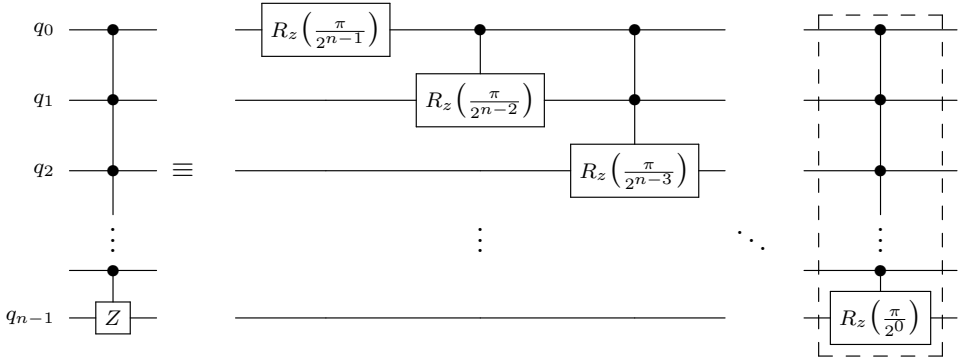

An example of the decomposition of the multi-controlled $R_z(\pi)$ gate for $n=4$ is depicted in Fig.~\ref{fig:mcRz}. This is the last multi-controlled gate in the decomposition of the CCCZ gate. The generic decomposition of the multi-controlled $R_z(\pi/2^j)$ gate is given in terms of an alternated sequence of CNOT and $u_1(\pm\pi/2^{j-n})$ gates as described by function \verb|new_mcrz| in Appendix~\ref{appen_1}, where
\begin{equation}\label{eq:gate_U_1}
u_1(\theta)=
\begin {bmatrix} 1 &0 \\ 
\noalign{\medskip}0 & \text{e}^{i\theta}
\end {bmatrix}.
\end{equation}
Note that $R_z(\theta)$ and $u_1(\theta)$ differ by a global phase and sometimes can be interchanged.
The positions of the CNOT controls (except for the first CNOT) in the decomposition of the multi-controlled $R_z(\pi/2^j)$ gate are given by function 
\begin{equation}\label{a(k)}
a(k)\,=\,\log_2[k-k\&(k-1)],
\end{equation}
where \& is the bitwise AND operator. For instance, the positions of the CNOT controls in Fig.~\ref{fig:mcRz} starting from the second is 0,1,0,2,0,1,0, which correspond to $a(1)$, ..., $a(7)$. The decomposition used in this work is useful only when the number of qubits is small, since the decomposition size increases as an exponential function in terms of the number of qubits. When the number of qubits is large, it is recommended to use ancilla qubits~\cite{HLZWW17}.

\begin{figure}[h!]
\begin{equation*}
    \Qcircuit @C=0.5em @R=0.9em {
	 	& \ctrl{1}&\qw&             &&      && & \qw & \qw & \ctrl{3} & \qw & \qw & \qw & \ctrl{3} & \qw & \qw & \qw & \ctrl{3} & \qw & \qw & \qw & \ctrl{3} & \qw & \qw & \qw\\
	 	& \ctrl{1}&\qw&             &&      && & \qw & \qw & \qw & \qw & \ctrl{2} & \qw & \qw & \qw & \qw & \qw & \qw & \qw & \ctrl{2} & \qw & \qw & \qw & \qw & \qw\\
	 	& \ctrl{1}&\qw&             &&\equiv&& & \ctrl{1} & \qw & \qw & \qw & \qw & \qw & \qw & \qw & \ctrl{1} & \qw & \qw & \qw & \qw & \qw & \qw & \qw & \qw & \qw\\
	 	& \gate{\scriptstyle R_z(\pi)}&\qw&      &&      && & \targ & \gate{\scriptstyle -\frac{\pi}{8}} & \targ & \gate{\scriptstyle +\frac{\pi}{8}} & \targ & \gate{\scriptstyle -\frac{\pi}{8}} & \targ & \gate{\scriptstyle +\frac{\pi}{8}} & \targ & \gate{\scriptstyle -\frac{\pi}{8}} & \targ & \gate{\scriptstyle +\frac{\pi}{8}} & \targ & \gate{\scriptstyle -\frac{\pi}{8}} & \targ & \gate{\scriptstyle +\frac{\pi}{8}} & \qw & \qw\\
	 }
\end{equation*}
\caption{Decomposition of multi-controlled $R_z(\pi)$, where \fbox{$\pm\frac{\pi}{8}$} is $u_1(\pm\pi/8)$. }
\label{fig:mcRz}
\end{figure}
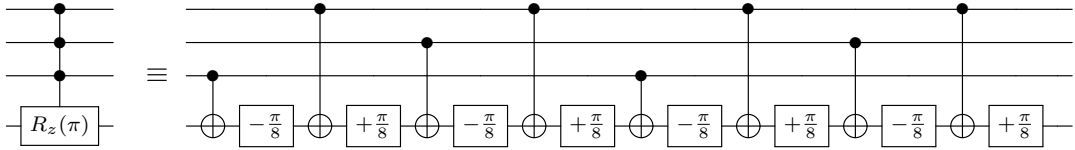

\subsubsection*{Alternative version to matrix $P$}

In this subsection we describe an alternative version to matrix $P$, whose decomposition in terms of basic gates is shorter. The strategy is to use a sequence of vectors $\ket{\pm}=(\ket{v}\pm \ket{w})/\sqrt 2$ and  $\ket{\pm i}=(\ket{v}\pm i\ket{w})/\sqrt 2$ as the unit vectors associated with the tiles of tessellation $\mathcal{T}_\beta$, where $v$ and $w$ are the vertices of the tile. The sequence is described by function $a(k)$ modulo 4 starting with $k=1$, where $a(k)$ is given by (\ref{a(k)}), and by Table~\ref{table:ak}, which associates each value $a(k) \mod 4$ with a unit vector in the set $\{\ket{\pm},\ket{\pm i}\}$.
\begin{table}[h!]
\begin{center}
\begin{tabular}{|c|l|c|l|}
\hline
 $a(k)\mod 4$ & vector & Hamiltonian &  evolution \\
\hline
 0 & $\ket{+}$ &    X     &$R_x(2\theta)$ \\
\hline
 1 & $\ket{-i}$ &     -Y   & $R_y(-2\theta)$ \\
\hline
 2 & $\ket{-}$ &       -X     &$R_x(-2\theta)$ \\
\hline
 3 & $\ket{+i}$ &    Y      &$R_y(2\theta)$ \\
\hline
\end{tabular}
\end{center}
\caption{Association between the set of tiles of tessellation $\mathcal{T}_\beta$ and the unit vectors $\ket{\pm}=(\ket{v}\pm \ket{w})/\sqrt 2$,  $\ket{\pm i}=(\ket{v}\pm i\ket{w})/\sqrt 2$. The third and the forth columns describe the sub-matrices of $H_1$ and ${{U}}_1$, respectively.}\label{table:ak}
\end{table}

In order to obtain the new local operator ${{U}}_1$, which uses the new unit vectors, we replace all multi-controlled Toffoli gates (2 or more controls) by multi-controlled $C(R_x(\pi))$ gates. Fig.~\ref{fig:new_P} describes the circuit of the new version for 4 qubits. Note that the multi-controlled $C(R_x(\pi))$ gates can be expressed as $HC(R_z(\theta))H$, where $H$ is the Hadamard gate. The decomposition of the new version in terms of basic gates can be accomplished by using the technique shown in Fig.~\ref{fig:mcRz}. The number of CNOT gates in our decomposition of the alternative version to $P$ is 13 for $n=4$, 
which is less than the original $P$, that has 21 CNOTs. 
\begin{figure}[h!]
$$
\Qcircuit @C=0.5em @R=0.5em {
&                                         &                       &     &     &     \\
& \gate{\scriptstyle R_x(\pi)}            & \qw                   & \qw & \qw & \qw \\
& \ctrl{-1}& \gate{\scriptstyle R_x(\pi)} & \qw                   & \qw & \qw &     \\ 
& \ctrl{-1}& \ctrl{-1} & \targ            & \qw                   & \qw &     &     \\
& \ctrl{-1}& \ctrl{-1} & \ctrl{-1}        & \gate{\scriptstyle X} & \qw &     &     \\
}  
$$ 
\caption{Circuit of the alternative version to matrices $P$. }
\label{fig:new_P}
\end{figure}
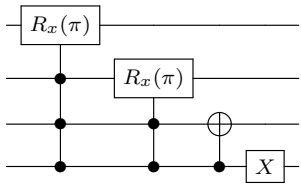

In the one-dimensional case, there is a straightforward equivalence between the SQW and coined models. The alternative version to $P$ in ${{U}}_1$ represents a nonhomogeneous coin, that is, a different coin for each vertex. ${{U}}_0$ on the other hand represents a lazy (when $\theta<\pi/2$) flip-flop shift operator.

\subsection{Implementations on IBM quantum computers}

We use Qiskit\footnote{\url{https://qiskit.org/}} to build the circuits of the evolution operator and to run the experiments that are shown in this section. The experiments must be run when the error rate of the quantum computers are as low as possible, otherwise they output useless results.

\subsubsection*{Results for one quantum walker}

Fig.~\ref{fig:4q1pqx2} depicts the probability distribution after one step of a staggered quantum walk with the modified tiles that are associated with the alternative version to matrix $P$. The walker's initial position is the origin. The action of the evolution operator spreads the position among vertices 0, 1, 2, and 15. 

\begin{figure}[h!]
\centering
\includegraphics[scale=0.7]{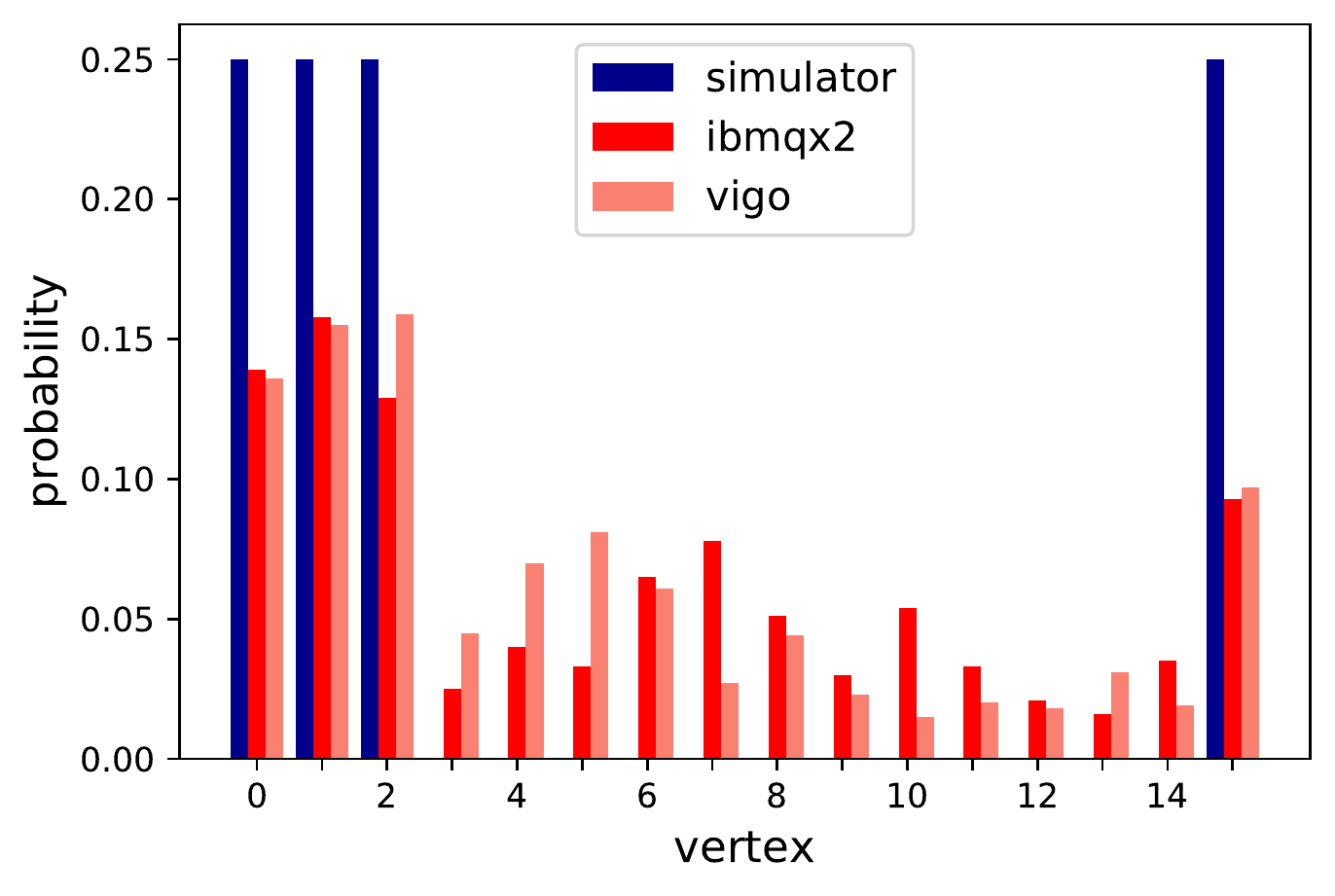}
\caption{Probability distribution after one step using simulation (blue), ibmqx2 (red), and vigo (salmon) quantum computers employing 4 qubits. }
\label{fig:4q1pqx2}
\end{figure}

The blue bars represent the simulated probability distribution and the red bars represent the result of the experiment in ibmqx2 (red) and ibmq\_vigo (salmon) quantum computers. The fidelities between the simulated and actual results are given in Table~\ref{table:fid1}.

\begin{table}[h!]
\begin{center}
\begin{tabular}{|c|c|c|}
\hline
 fidelity & ibmqx2 & vigo \\
\hline
 $1-d$ & 0.519  & 0.547 \\
\hline
 $1-h$ & 0.468  & 0.486 \\
\hline
\end{tabular}
\end{center}
\caption{Fidelities between the probability distributions generated by the quantum computer $p$ and the exact simulation $q$ for one walker on a 16-vertex cycle, where the total variation distance $d$ and the Hellinger distance $h$ are given by $d=\frac{1}{2}\sum_x |p_x-q_x|$ and $h^2=\frac{1}{2} \sum_x \left(\sqrt{p_x}-\sqrt{q_x}\right)^2$.}\label{table:fid1}
\end{table}

\subsubsection*{Results for two interacting quantum walkers}

Fig.~\ref{fig:interacting2} depicts the probability distribution of two interacting quantum walkers up to two steps on a 4-node cycle. We have used the interaction given by~(\ref{eq:interaction_operator_2}) taking the node with label 3 as marked. The initial state is $(x_1,x_2)=(0,2)$, which is obtained with high fidelity. After the first step, the positions of both walkers spread along the whole cycle, and then they interact at node with label 3. After the second step, we have an entangled state that shows that the walkers are at $(x_1,x_2)=(1,3)$ with high probability (77.6\%).

\begin{figure}[h!]
\centering
\includegraphics[scale=0.37]{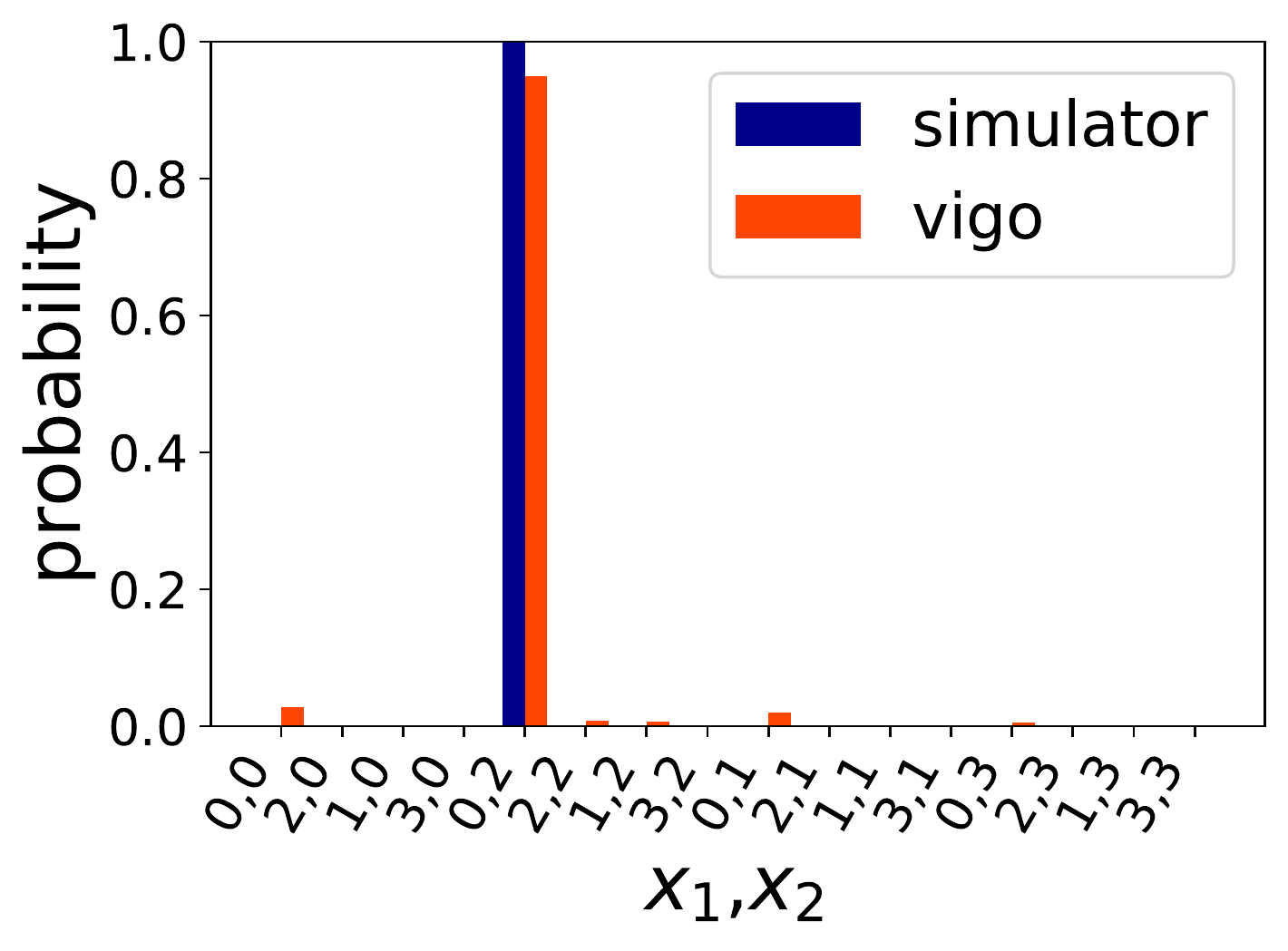}
\includegraphics[scale=0.37]{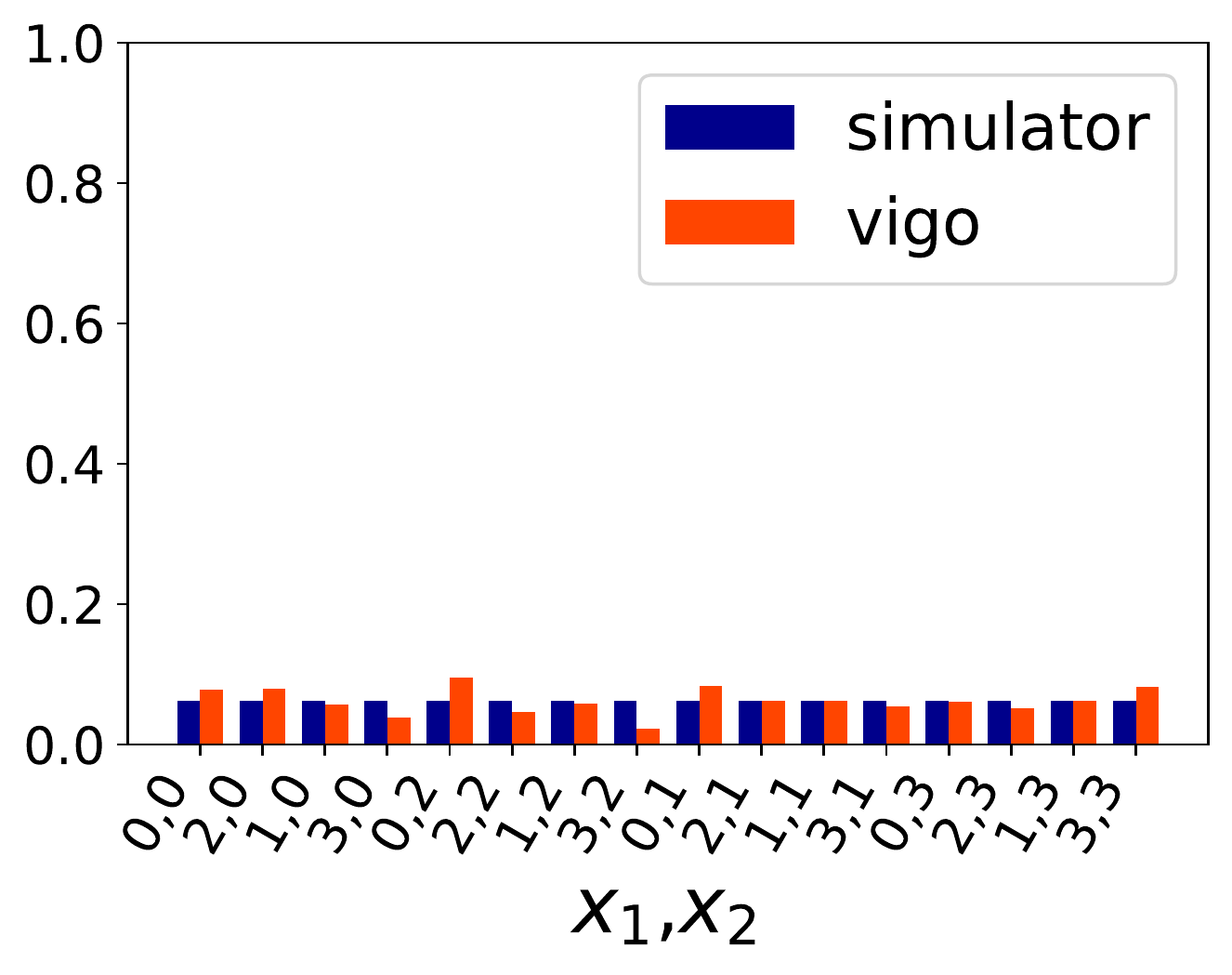}
\includegraphics[scale=0.37]{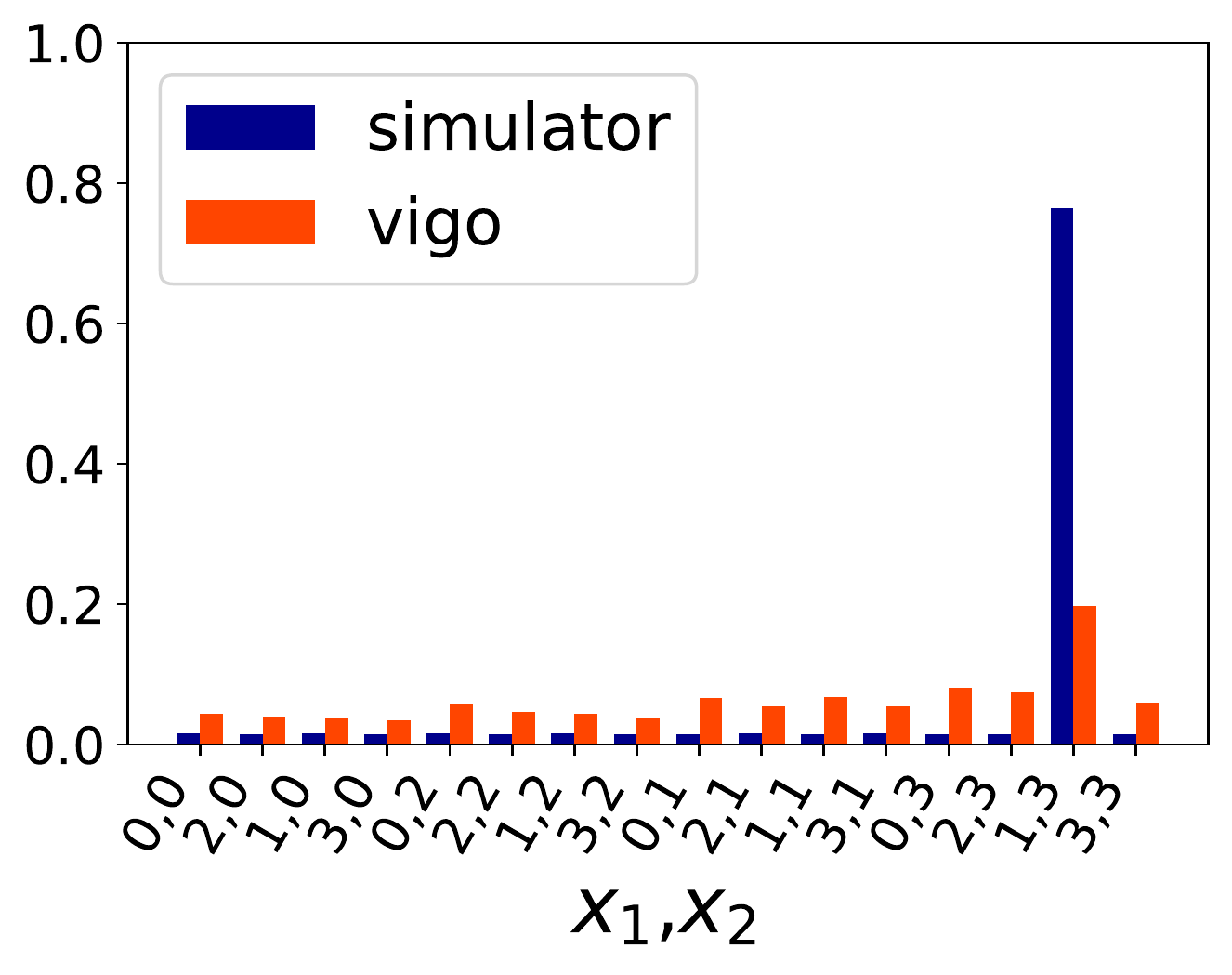}
\caption{Probability distribution of the initial state (i.s.), first and second steps using simulation (blue) and vigo quantum computer (red) for two interacting quantum walkers on a 4-vertex cycle using 4 qubits. }
\label{fig:interacting2}
\end{figure}

The fidelities between the exact calculations (blue) and the results generated by the quantum computer (red) are given in Table~\ref{table:fid2}. Although the fidelity of the second step is not high, the position of the largest peak of probability distribution obtained from the quantum computer coincides with the correct position.

\begin{table}[h!]
\begin{center}
\begin{tabular}{|c|c|c|c|}
\hline
 fidelity & i.s. & step 1 & step 2  \\
\hline
 $1-d$ & 0.950 & 0.885 & 0.433 \\
\hline
 $1-h$ & 0.814 & 0.890 & 0.572 \\
\hline
\end{tabular}
\end{center}
\caption{Fidelities between the probability distributions generated by the quantum computer and by the exact calculations.}\label{table:fid2}
\end{table}

\section{Quantum walk on the torus}\label{sec:grid}

Consider a two-dimensional square lattice with cyclic boundary conditions and ${\sqrt{N}}\times {\sqrt{N}}$ vertices labeled by $0, \ldots, {N}-1$, where $N$ is a square number. At least four tessellations are required to define the evolution operator of a  SQW on the lattice~\cite{ACFKMPP20}. There are infinite ways to tessellated the lattice. The simplest way without being trivial is depicted in Fig.~\ref{fig:lattice}.

\begin{figure}[h!]
\centering
\includegraphics[scale=0.3]{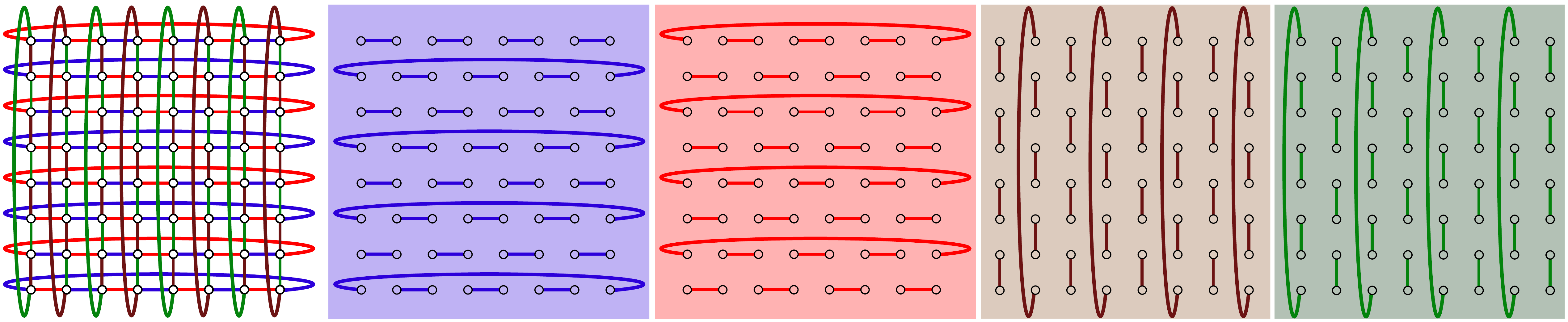}
\caption{A tessellation cover of the two-dimensional lattice with cyclic borders in the form of a 64-vertex torus. Each tessellation is associated with a local unitary operator.}
\label{fig:lattice}
\end{figure}

Using the one-dimensional SQW evolution operators (\ref{eq:U_0}) and (\ref{eq:U_1}), and labeling the vertices from top-left to down-right row-by-row, we obtain the matrix form for the two-dimensional SQW operators
\begin{align}
{{U}}_{00} &= \left(\mathds{I} \otimes \begin{bmatrix}1  &  0\\0  &  0\\ \end{bmatrix}\right) \otimes {{U}}_0 + \left(\mathds{I} \otimes \begin{bmatrix}0  &  0\\0  &  1\\ \end{bmatrix}\right) \otimes {{U}}_1 \label{eq:U_0_hor},\\
{{U}}_{10} &= \left(\mathds{I} \otimes \begin{bmatrix}1  &  0\\0  &  0\\ \end{bmatrix}\right) \otimes {{U}}_1 + \left(\mathds{I} \otimes \begin{bmatrix}0  &  0\\0  &  1\\ \end{bmatrix} \right) \otimes {{U}}_0 \label{eq:U_1_hor},
\end{align}
corresponding to the blue and red (first and second) tessellations in Fig.~\ref{fig:lattice}, and
\begin{align}
{{U}}_{01} &= {{U}}_0 \otimes \left(\mathds{I} \otimes \begin{bmatrix}1  &  0\\0  &  0\\ \end{bmatrix}\right) + {{U}}_1 \otimes \left(\mathds{I} \otimes \begin{bmatrix}0  &  0\\0  &  1\\ \end{bmatrix}\right) \label{eq:U_0_ver},\\
{{U}}_{11} &= {{U}}_1 \otimes \left(\mathds{I} \otimes \begin{bmatrix}1  &  0\\0  &  0\\ \end{bmatrix}\right) + {{U}}_0 \otimes \left(\mathds{I} \otimes \begin{bmatrix}0  &  0\\0  &  1\\ \end{bmatrix}\right) \label{eq:U_1_ver},
\end{align}
for the brown and green (third and forth) tessellations in Fig.~\ref{fig:lattice}.
The evolution operator for the SQW with Hamiltonians on the lattice is given by \cite{POM17,moqadam2018boundary}
\begin{equation}
{{U}}^{\mathrm{2D}} = {{U}}_{11}{{U}}_{10}{{U}}_{01}{{U}}_{00}.
\label{eq:2DSQW}
\end{equation}
The same evolution operator (with $\theta=\pi/4$) was used in \cite{PF17} to describe a quantum walk-based search algorithm and it is related with the alternate two-step model proposed in \cite{DMB11}.

\subsection{Decomposition of the 2D SQW evolution}
Assume that $\sqrt{N}$ is a power of two. The matrix representation of the operators given by Eqs.~(\ref{eq:U_0_hor}) and (\ref{eq:U_1_hor}) has the decomposition
\begin{align}
{{U}}_{00} &= \mathds{I}\otimes \ket{0}\bra{0} \otimes {{U}}_0 + \mathds{I}\otimes \ket{1}\bra{1} \otimes P^{-1}{{U}}_0 P \nonumber\\
&=  Q_x^{-1} ( \mathds{I}\otimes {{U}}_0 ) {Q_x}, \label{eq:U_0_hor_decomp} \\
{{U}}_{10} &= \mathds{I} \otimes \ket{1}\bra{1} \otimes {{U}}_0 + \mathds{I} \otimes \ket{0}\bra{0} \otimes P^{-1}{{U}}_0 P \nonumber \\
&= (\mathds{I}\otimes  X \otimes \mathds{I}) {{U}}_{00} (\mathds{I}\otimes X \otimes \mathds{I}) \label{eq:U_1_hor_decomp},
\end{align}
where $\ket{0}\bra{0}=\left(\begin{smallmatrix}1&0\\0&0\end{smallmatrix}\right)$, $\ket{1}\bra{1}=\left(\begin{smallmatrix}0&0\\0&1\end{smallmatrix}\right)$,  $\{\ket{0},\ket{1}\}$ is the computational basis for the Hilbert space corresponding to a single qubit, and ${Q_x}$ is given by
\begin{align}
{Q_x} &= \mathds{I}\otimes\ket{0}\bra{0} \otimes \mathds{I} + \mathds{I}\otimes\ket{1}\bra{1} \otimes P. \label{eq:Q_0}
\end{align}
Operator ${Q_x}$ is a controlled-$P$ gate with the control qubit $\ket{q_{n/2-1}}$ and target qubits $\ket{q_{n/2}\ldots q_{n-1}}$. The inverse of ${Q_x}$ is obtained by replacing $P$ with $P^{-1}$ in Eq.~(\ref{eq:Q_0}).

Similarly, we find
\begin{align}
{{U}}_{01} &= {{U}}_0 \otimes \mathds{I} \otimes \ket{0}\bra{0} + P^{-1}{{U}}_0 P \otimes \mathds{I} \otimes \ket{1}\bra{1} \nonumber\\
&= Q_y^{-1} \Bigl( {{U}}_0\otimes \mathds{I} \Bigr) Q_y \label{eq:U_0_ver_decomp},\\
{{U}}_{11} &= P^{-1}{{U}}_0 P \otimes \mathds{I} \otimes \ket{0}\bra{0} + {{U}}_0 \otimes \mathds{I} \otimes \ket{1}\bra{1} \nonumber\\
&= (\mathds{I} \otimes X) {{U}}_{01} (\mathds{I} \otimes X) \label{eq:U_1_ver_decomp},
\end{align}
corresponding to operators (\ref{eq:U_0_ver}) and (\ref{eq:U_1_ver}), where
\begin{align}
Q_y &= \mathds{I} \otimes \mathds{I} \otimes \ket{0}\bra{0} + P \otimes \mathds{I} \otimes \ket{1}\bra{1}, \label{eq:R_0}
\end{align}
which is a controlled-$P$ gate with the control qubit $\ket{q_{n-1}}$ and target qubits $\ket{q_{0}\ldots q_{n/2-1}}$.
Fig.~\ref{fig:2dsqw} shows the circuit that implements the SQW evolution operator given by Eq.~(\ref{eq:2DSQW}) on a 16-vertex two-dimensional lattice.

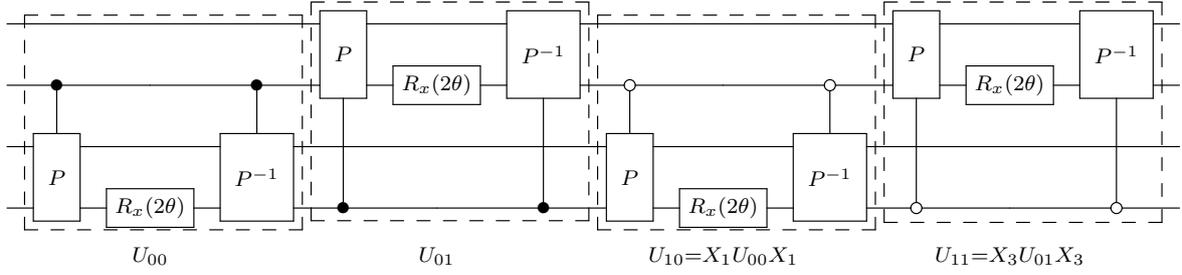
\begin{figure}[h!]
$$
\Qcircuit @C=0.9em @R=1em { 
& \qw                           & \qw                              & \qw                                & \multigate{1}{\scriptstyle P} & \qw                              & \multigate{1}{\scriptstyle P^{-1}} & \qw                           & \qw                              & \qw                                & \multigate{1}{\scriptstyle P} & \qw                              & \multigate{1}{\scriptstyle P^{-1}} & \qw  \\ 
& \ctrl{1}                      & \qw                              & \ctrl{1}                           & \ghost{\scriptstyle P}        & \gate{\scriptstyle R_x(2\theta)} & \ghost{\scriptstyle P^{-1}}        & \ctrlo{1}                     & \qw                              & \ctrlo{1}                          & \ghost{\scriptstyle P}        & \gate{\scriptstyle R_x(2\theta)} & \ghost{\scriptstyle P^{-1}}        & \qw  \\ 
& \multigate{1}{\scriptstyle P} & \qw                              & \multigate{1}{\scriptstyle P^{-1}} & \qw                           & \qw                              & \qw                                & \multigate{1}{\scriptstyle P} & \qw                              & \multigate{1}{\scriptstyle P^{-1}} & \qw                           & \qw                              & \qw                                & \qw  \\
& \ghost{\scriptstyle P}        & \gate{\scriptstyle R_x(2\theta)} & \ghost{\scriptstyle P^{-1}}        & \ctrl{-2}                     & \qw                              & \ctrl{-2}                          & \ghost{\scriptstyle P}        & \gate{\scriptstyle R_x(2\theta)} & \ghost{\scriptstyle P^{-1}}        & \ctrlo{-2}                    & \qw                              & \ctrlo{-2}                         & \qw  \\
& &{\scriptstyle {{U}}_{00}} \gategroup{1}{2}{4}{4}{0.6em}{--}& & &{\scriptstyle {{U}}_{01}} \gategroup{1}{5}{4}{7}{0.6em}{--}& & &{\scriptstyle {{U}}_{10}=X_1{{U}}_{00}X_1} \gategroup{1}{8}{4}{10}{0.6em}{--}& & &{\scriptstyle {{U}}_{11}=X_3{{U}}_{01}X_3} \gategroup{1}{11}{4}{13}{0.6em}{--}& &
}  
$$
\caption{Circuit of the two-dimensional SQW evolution operator, including ${{U}}_{00},{{U}}_{01},{{U}}_{10}$, and ${{U}}_{11}$, for a quantum walk on a lattice with 16 vertices using 4 qubits.}
\label{fig:2dsqw}
\end{figure}

\subsection{Implementations on IBM quantum computers}
In the construction of the controlled-$P$ gate, we use the alternative version to $P$ and add the control qubit to all its components, as depicted in Fig.~\ref{fig:new_cP}.
As discussed earlier, the alternative version to $P$ has a shorter decomposition in terms of basic gates at the cost of changing the unit vectors associated with the tiles of the tessellations shown in Fig.~\ref{fig:lattice}.
The corresponding sequence of unit vectors introduced by the controlled gate is described by function $a(k) \mod 4$~(\ref{a(k)}), similar to the alternative version to $P$, after interchanging $\ket{\pm}\leftrightarrow\ket{\pm i}$ in Table (\ref{table:ak}).

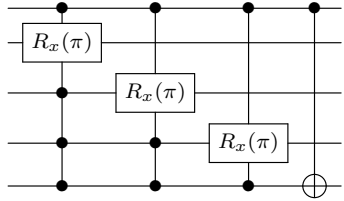
\begin{figure}[h!]
$$
\Qcircuit @C=0.5em @R=0.4em {
& \ctrl{+1}                                           & \ctrl{+2} & \ctrl{+3}& \ctrl{+4} & \qw\\
& \gate{\scriptstyle R_x(\pi)}                        & \qw       & \qw      & \qw       & \qw\\
&  \ctrl{-1}& \gate{\scriptstyle R_x(\pi)}             & \qw       & \qw      & \qw       &    \\ 
&   \ctrl{-1}& \ctrl{-1} & \gate{\scriptstyle R_x(\pi)} & \qw       & \qw      &           &    \\
&   \ctrl{-1}& \ctrl{-1} & \ctrl{-1}                    & \targ     & \qw      &           &    \\
}  
$$ 
\caption{Circuit of the alternative version to controlled-$P$ obtained from the alternative version to $P$ described in Fig.~\ref{fig:new_P}.}
\label{fig:new_cP}
\end{figure}

\begin{figure}[h!]
	\centering
	\includegraphics[scale=0.55,trim=0 0 0 45]{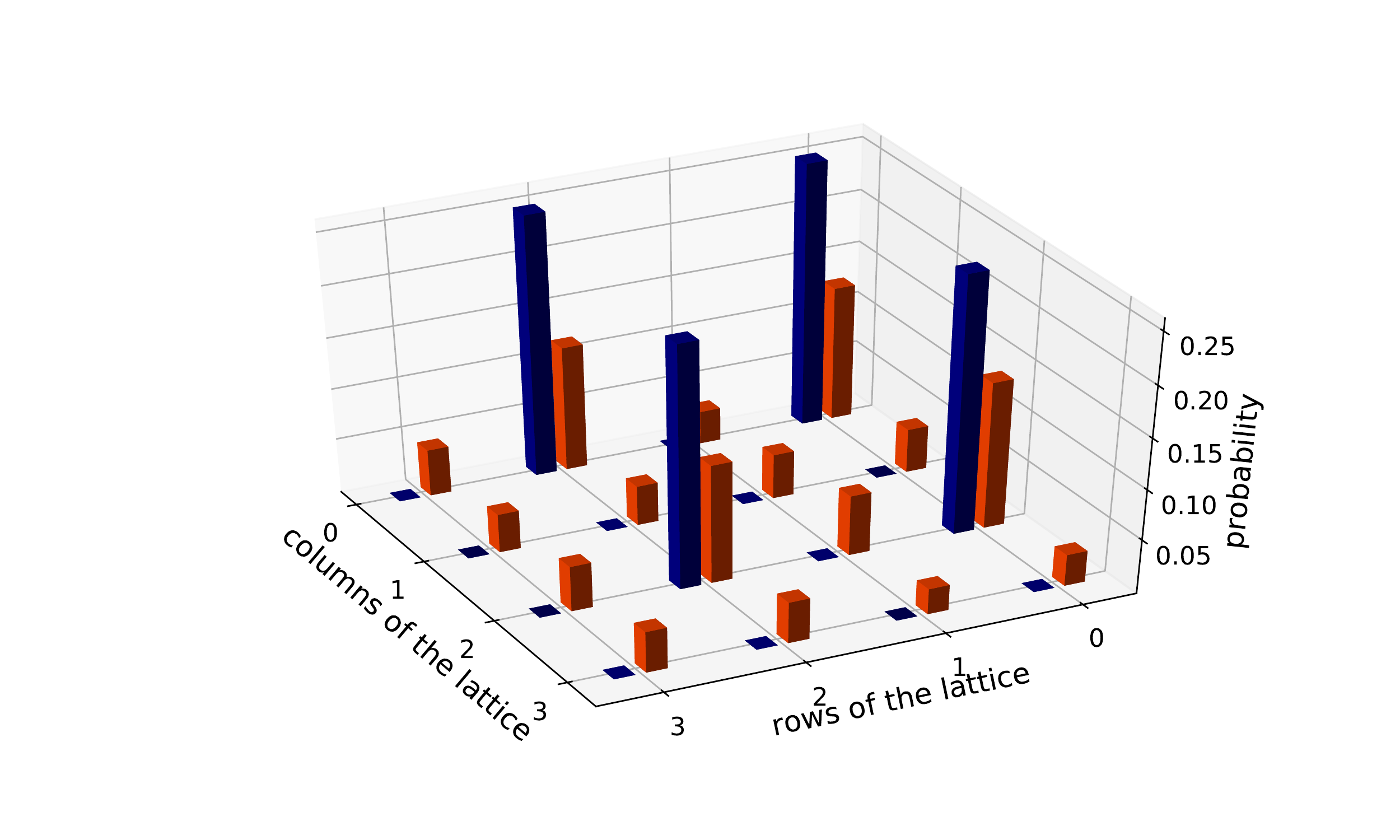}
	\caption{Probability distribution after one step of the two-dimensional SQW, with a non-local initial state, using simulation (blue) and vigo quantum computer (red) employing 4 qubits.}
	\label{fig:2dsqw-vigo}
\end{figure}

Fig.~\ref{fig:2dsqw-vigo} depicts the probability distribution after one step of the two-dimensional SQW with the modified tiles. The walker initial position is an equal superposition of all vertices with even labels with the amplitudes $1/\sqrt{8}$.
The fidelities between the simulated and actual results are given in Table~\ref{table:fid1_2D}.

\begin{table}[h!]
	\begin{center}
		\begin{tabular}{|c|c|}
			\hline
			fidelity & vigo \\
			\hline
			$1-d$ & 0.515 \\
			\hline
			$1-h$ &  0.468\\
			\hline
		\end{tabular}
	\end{center}
	\caption{Fidelities between the probability distributions generated by the vigo quantum computer and the exact simulation for one walker on a 16-vertex two-dimensional lattice.}\label{table:fid1_2D}
\end{table}

\section{Quantum walk-based spatial search}\label{sec:search}

In this section, we describe the implementation of a quantum walk-based spatial search algorithm on complete graphs with 8 ($K_8$) and 16 ($K_{16}$) vertices. Since the complete graph $K_N$ is 1-tessellable, the evolution operator of a staggered quantum walk on $K_N$ is the Grover operator $G$, given by $G=-H^{\otimes n}RH^{\otimes n}$, where $n=\log_2 N$ and
\begin{equation}
R = I-2\ket{0}\bra{0}.
\end{equation}
A quantum walk-based search algorithm uses a modified evolution operator ${{U}}'$~\cite{Por18book}, given by
\begin{equation}
{{U}}' = G\,R,
\end{equation}
when the marked vertex has label 0. The initial state $\ket{\psi_0}$ is the uniform superposition of all states of the computational basis $\ket{\psi_0}=\sum_{j=0}^{N-1}\ket{j}$, and the optimal number of steps is the closest integer to $(\pi/4)\sqrt{N}$. Note that the quantum walk-based spatial search on the complete graph is equivalent to Grover's algorithm~\cite{Gro97,Por18book}.

\subsection{Implementation on IBM quantum computers}

Since ${{U}}'= -(H^{\otimes n}R)^2$, the missing task is to find the decomposition of $R$. It is straightforward to check that
\begin{equation}
 R=X^{\otimes n}C_{0,...,n-2}(Z_{n-1})X^{\otimes n}.
 \end{equation}
The decomposition of $C_{0,...,n-2}(Z_{n-1})$ is depicted in Fig.~\ref{fig:economicdecomp}. As we have discussed in Sec.~\ref{sec:decomp}, the number of basic gates reduces if we replace $Z$ for $R_z(\pi)$ in $C_{0,...,n-2}(Z_{n-1})$. So, instead of using $R$ in our implementations, we use $R'$, which is given by
\begin{equation}
 R'=X^{\otimes n}C_{0,...,n-2}(R_z(\pi))X^{\otimes n}. 
\end{equation} 
The success probability using $R'$ is not going to be as high as is in the original algorithm, but is high enough for complete graphs up to 16 vertices.

\begin{figure}[h!]
\centering
\includegraphics[scale=0.54]{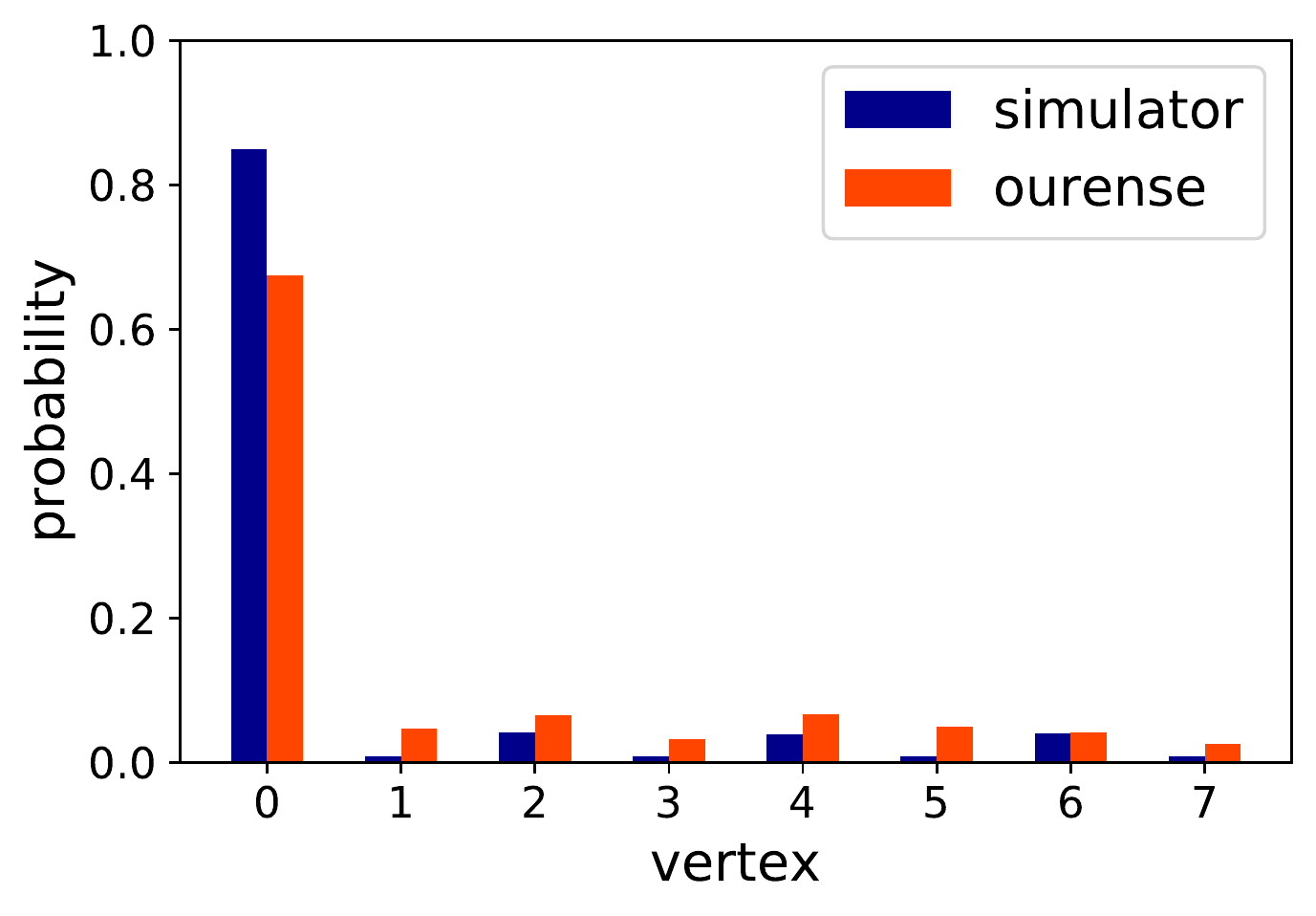}\hspace{4pt}
\includegraphics[scale=0.54]{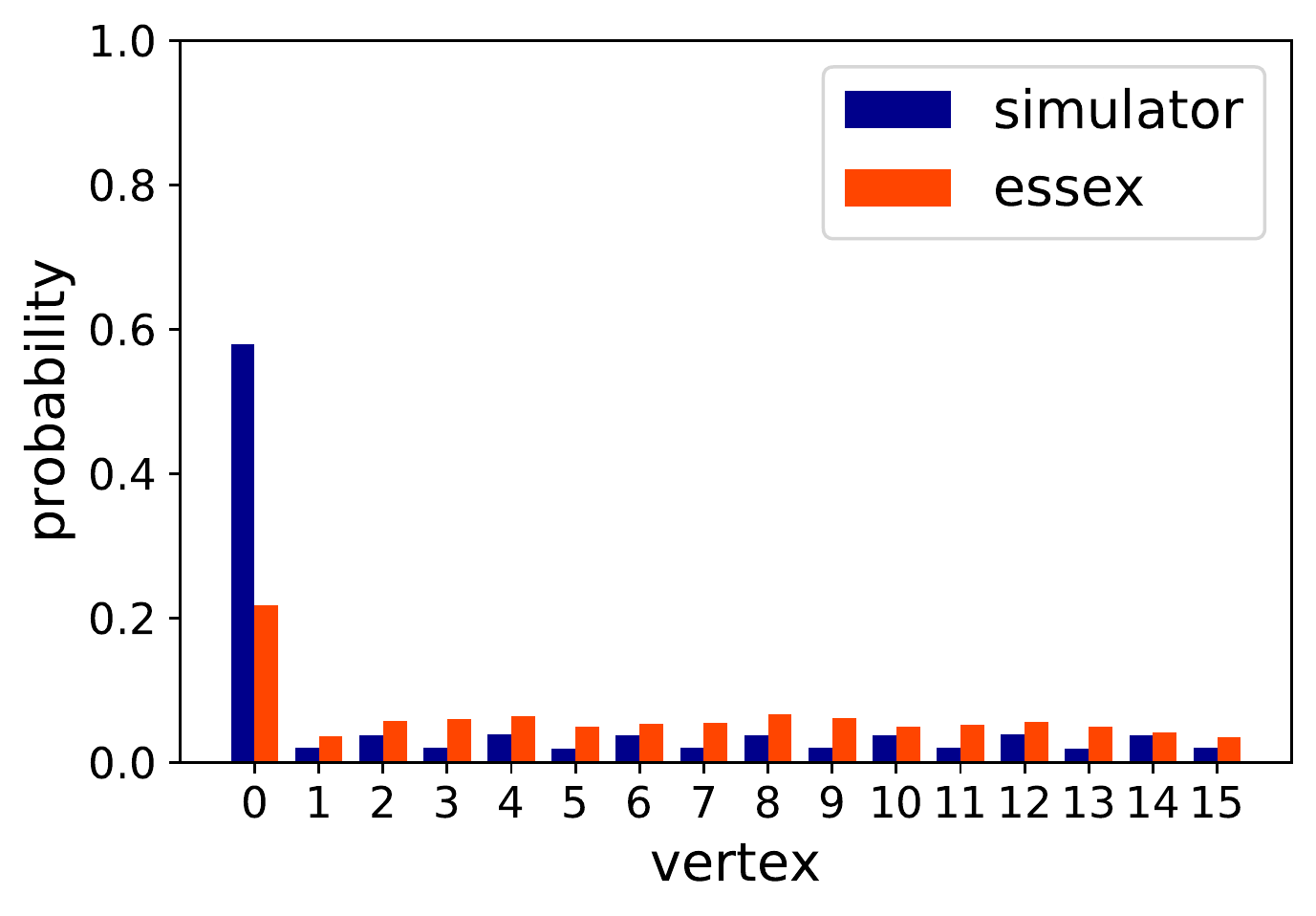}
\caption{Probability distribution of a quantum walk-based search algorithm on $K_8$ (left) and $K_{16}$ (right) after three steps using simulation (blue) and IBM quantum computers (red). }
\label{fig:spatialsearch}
\end{figure}

Fig.~\ref{fig:spatialsearch} depicts the probability distribution after three steps of the quantum walk-based search algorithm on the complete graph $K_8$ (left panel) and $K_{16}$ (right panel); Table~\ref{table:fid1_2D_search} shows the corresponding fidelities using the total variation distance $d$ and the Hellinger distance $h$. The success probability of finding the marked vertex for the $K_8$ case using a quantum computer is $0.674$ and for the $K_{16}$ case is $0.218$. These results are better than a 3-attempt random search, which has success probability $3/8=0.375$ for 8 elements and $3/16=0.187$ for 16 elements.

\begin{table}[h!]
	\begin{center}
		\begin{tabular}{|c|c|c|}
			\hline
			fidelity & $K_8$ & $K_{16}$ \\
			\hline
			$1-d$ & 0.825 & 0.639 \\
			\hline
			$1-h$ &  0.821 & 0.721 \\
			\hline
		\end{tabular}
	\end{center}
	\caption{Fidelities between the exact probability distribution (blue) and the probability distribution outputed by IBM quantum computers (red).}\label{table:fid1_2D_search}
\end{table}

\section{Conclusions}\label{sec:conc}

In this work, we have implemented the evolution operator of staggered quantum walks on cycles, two-dimensional lattices, and complete graphs on IBM quantum computers. We have shown how to decompose each local unitary operator in terms of basic gates. The multi-controlled Toffoli gate is an important building block, whose decomposition can be analyzed independently of the remaining circuit.

We have implemented the first step of a quantum walk on the 16-node cycle and 16-node two-dimensional lattice, and obtained results with fidelity around 50\%. We have implemented two steps of two interacting quantum walkers on the 4-node cycle and obtained result with fidelity around 57\%. Appendix~\ref{appen_2} describes further results for a quantum walk on the 8-node cycle, with high fidelity up to eight steps. Although high, the fidelity is not a good measure of the quality of the results for the final steps, because the exact probability distributions are almost flat, and error-dominated outputs have also almost-flat associated probability distributions.

We have implemented a quantum walk-based search algorithm on complete graphs with 8 and 16 vertices; the results are more efficient than the equivalent classical algorithms. Since the quantum walk version is equivalent to Grover's algorithm (the Qiskit program is the same), we have implemented on IBM quantum computers a modified version of Grover's algorithm that is more efficient than classical brute-force search on unsorted lists of 8 and 16 elements.

We conclude that IBM quantum computers are able to produce non-trivial results in terms of quantum walk implementations, which includes non-classical aspects of quantum mechanics, such as the entanglement between quantum walkers.

We also conclude that the staggered model is suitable for NISQ computers because the implementation of coinless models requires fewer qubits than the coined model. For instance, the implementation of two interacting quantum walkers on a 4-cycle in the coined model needs six qubits, in contrast with four qubits in the coinless case.

\section*{Acknowledgments}
The authors thank J.~Valardan, M.~A.~V.~Macedo Jr., I.~J.~Ara\'ujo Jr.,~and M.~Paredes for useful discussions. 
JKM acknowledges financial support from CNPq grant PCI-DA No.~304865/2019-2.
RP acknowledges financial support from CNPq grant No.~303406/2015-1 and Faperj grant CNE No. E-26/202.872/2018.

\appendix

\section{Appendix}\label{appen:proof}

\begin{proposition}
The decomposition of $P$ given by Eq.~(\ref{eq:permutation}) in terms of multi-controlled Toffoli gates is
\[
P = X_{k-1}\,C_{k-1}(X_{k-2})C_{k-2,k-1}(X_{k-3})\cdots C_{1,\ldots,k-1}\,(X_{0}),
\]
where $k$ is the number of qubits.
\end{proposition}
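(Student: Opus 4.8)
The plan is to verify directly that the claimed product of gates, applied to an arbitrary computational basis state $\ket{q_0\ldots q_{k-1}}$ (equivalently, to $\ket{y}$ for $0\le y\le 2^k-1$), produces $\ket{y+1 \bmod 2^k}$, which is exactly the action of $P$ in Eq.~(\ref{eq:permutation}). Write $y$ in binary as $q_0 q_1\cdots q_{k-1}$, where I keep the paper's convention that $q_0$ is the top wire; I will need to fix whether $q_0$ is the most or least significant bit so that the labels $\alpha_x=\{2x,2x+1\}$ and the matrix form of $P$ are consistent, and then argue that ``add one mod $2^k$'' is the ripple-carry increment. The circuit reads (from the operator-composition order) $C_{1,\ldots,k-1}(X_0)$ first, then $C_{2,\ldots,k-1}(X_1)$, \ldots, then $C_{k-1}(X_{k-2})$, then $X_{k-1}$ last; I would double-check the left-to-right vs.\ first-applied convention against Fig.~\ref{fig:U_1}, since $P$ is drawn there and the correctness hinges on reading the diagram in the right direction.

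The key computation is the standard carry analysis. First I would observe that the least-significant bit is always flipped (that is the unconditional $X$ on the appropriate end qubit). Then, proceeding bit by bit from least to most significant, the $i$-th controlled gate $C_{i+1,\ldots,k-1}(X_i)$ flips bit $q_i$ precisely when all the lower-significance bits $q_{i+1},\ldots,q_{k-1}$ were originally $1$ — which is exactly the condition that a carry propagates into position $i$ when we add $1$. I would make this precise by induction on the bit position: after processing positions $k-1,k-2,\ldots,i+1$, the lower block holds the correct low-order bits of $y+1$, and a carry enters position $i$ iff the original low block was all ones; since the controls of the $i$-th gate read the \emph{original} values of the higher-index (lower-significance) qubits — here I must check that no earlier gate in the sequence has already disturbed those control qubits, which is true because gate $j$ only targets $q_j$ and all targets are distinct — the gate fires in exactly the carry case. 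When $y=2^k-1$ every gate fires, every bit flips, and we get $0=y+1\bmod 2^k$, matching the wrap-around top-right $1$ entry of $P$.

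The main obstacle is bookkeeping rather than conceptual: getting the endianness and the gate-ordering conventions mutually consistent so that the stated product (with $X_{k-1}$ written leftmost) really is the increment map and not its inverse $P^{-1}$ (which appears on the right side of Fig.~\ref{fig:U_1} with the gates in reversed order). Once the convention is pinned down, I would also remark that the argument shows the reversed product equals $P^{-1}$, since $P$ is a permutation and reversing the order of these self-inverse controlled-$X$ gates — each $C_S(X_j)$ is an involution — need not simply invert it, so the $P^{-1}$ circuit's correctness is better seen as the ``subtract one'' ripple-borrow, handled by the mirror-image version of the same induction. I expect the write-up to consist of one clean inductive lemma on the carry, plus a sentence each for the base case (LSB) and the all-ones wrap-around case.
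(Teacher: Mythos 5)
Your proof is correct and follows essentially the same route as the paper's: a direct ripple-carry verification that the cascade of multi-controlled Toffolis implements the increment $\ket{q}\mapsto\ket{q+1 \bmod 2^k}$ (with $q_{k-1}$ as the least-significant bit and the all-ones state as the wrap-around case), and your explicit check that each gate's controls still hold their \emph{original} values---because every previously applied gate targets only lower-index qubits---is the one point the paper leaves implicit. The only slip is the tangential remark about $P^{-1}$: reversing the order of a product of involutions \emph{does} give exactly the inverse of the product, so the mirrored circuit on the right of Fig.~\ref{fig:U_1} is $P^{-1}$ for free, with no separate ripple-borrow argument required.
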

\begin{proof}
From Eq.~(\ref{eq:permutation}), we have 
\begin{equation}
P\ket{q}=\begin{cases} \ket{q+1}, & \text{if }q<N-1 \\ \ket{0}, &\text{if }q=N-1, \end{cases}
\end{equation}
where $\ket{q}$ is a generic state of the computational basis in decimal notation. The binary representation of $q$ is $(q_0\ldots q_{k-1})_2$. In the case $q=N-1$, the binary representation of the state is $\ket{(1\ldots1)_2}$ and it is straightforward to verify that the circuit of $P$ in Fig.~\ref{fig:U_1} generates the desired output state $\ket{(0\ldots 0)_2}$. Suppose that $q<N-1$. The action of $P$ on a generic qubit state is
\begin{equation}
\begin{aligned}
P\ket{q_0\cdots q_{k-1}} &= 
C_{1,\ldots,k-1}(X_{0})\ket{q_{0}}\cdots C_{k-2,k-1}(X_{k-3})\ket{q_{k-3}}\,\,C_{k-1}(X_{k-2})\ket{q_{k-2}}\,\,X_{k-1}\ket{q_{k-1}}. \nonumber\\
\end{aligned}
\end{equation}
Simplifying the right-hand side by using Eq.~(\ref{eq:generalized_Toff}) gives
\[
 \ket{q_{0}\oplus (q_1\cdots q_{k-1})}\ldots \ket{q_{k-3}\oplus(q_{k-2}\cdot q_{k-1})}\ket{q_{k-2}\oplus q_{k-1}}\ket{q_{k-1}\oplus 1}.
\]
On the other hand, the addition $q+1$ in the binary representation,  namely $(q_0 \cdots q_{k-1})_2\oplus1$ yields
\[
\begin{tabular}{cccccc}
${\color{gray}q_1\cdots q_{k-1}}$ & & ${\color{gray}q_{k-2}\cdot q_{k-1}}$ & ${\color{gray}q_{k-1}}$ & &\\
$q_{0}$ & $\ldots$ & $q_{k-3}$ & $q_{k-2}$ & $q_{k-1}$ & \\
 &  &  &  & 1&$\oplus$ \\
  \hline
$q_{0}\oplus (q_1\cdots q_{k-1})$ & $\ldots$ & $q_{k-3}\oplus(q_{k-2}\cdot q_{k-1})$ & $q_{k-2}\oplus q_{k-1}$ & $q_{k-1}\oplus 1$
\end{tabular}
\]
where the gray colored bits in the first line of the table show the carries. The result (given in the forth line of the table) is obtained by performing the addition of the rightmost bits of the table, that is, adding bits $q_{k-1}$ and 1. The result is $q_{k-1}\oplus 1$ and the carry is $q_{k-1}$, which is placed over $q_{k-2}$ as a gray colored bit. Then, bits $q_{k-1}$ and $q_{k-2}$ are added that gives $q_{k-2}\oplus q_{k-1}$ and the carry is $q_{k-2}\cdot q_{k-1}$, which is placed over $q_{k-3}$. The addition goes on until the leftmost bit is reached. The final result coincides with the action of $P$ on $\ket{q_0\ldots q_{k-1}}$, which proves the proposition.
\end{proof}

\section{Appendix}\label{appen_1}

This appendix describes function \verb|new_mcrz|, which decomposes the multi-controlled $R_z$ gate, and function \verb|new_mcz|, which decomposes the multi-controlled $Z$ gate. Those functions use the same syntax of functions \verb|mcrz| and \verb|mcz| implemented in Qiskit. Note that our implementation uses fewer CNOTs.

\begin{verbatim}
from qiskit import *
from math import pi,log
q = QuantumRegister(4)
qc = QuantumCircuit(q)

def new_mcrz(qc,theta,q_controls,q_target):
    n = len(q_controls)
    newtheta = -theta/2**n
    a = lambda n: log(n-(n&(n-1)),2)
    qc.cx(q_controls[n-1],q_target)
    qc.u1(newtheta,q_target)
    for i in range(1,2**(n)):
        qc.cx(q_controls[int(a(i))],q_target)
        qc.u1((-1)**i*newtheta,q_target)
QuantumCircuit.new_mcrz = new_mcrz

qc.new_mcrz(pi,[q[0],q[1],q[2]],q[3])
print(qc.draw())
\end{verbatim}

\begin{verbatim}
qc = QuantumCircuit(q)

def new_mcz(qc,q_controls,q_target):
    L = q_controls + [q_target]
    n = len(L)
    qc.u1(pi/2**(n-1),L[0])
    for i in range(2,n+1):
        qc.new_mcrz(pi/2**(n-i),L[0:i-1],L[i-1])
QuantumCircuit.new_mcz = new_mcz

qc.new_mcz([q[0],q[1],q[2]],q[3])
print(qc.draw())
\end{verbatim}

\section{Appendix}\label{appen_2}

Fig.~\ref{fig:8-cycle} depicts our results for a staggered quantum walk on the 8-cycle with $\theta=\pi/4$ and initial condition $(\ket{3}+\ket{4})/\sqrt 2$ using three qubits of the ourense quantum computer. The two high peaks moving in opposite direction display the well known signature of quantum walks on the one-dimensional lattice. 

\begin{figure}[h!]
\centering
\includegraphics[scale=0.33]{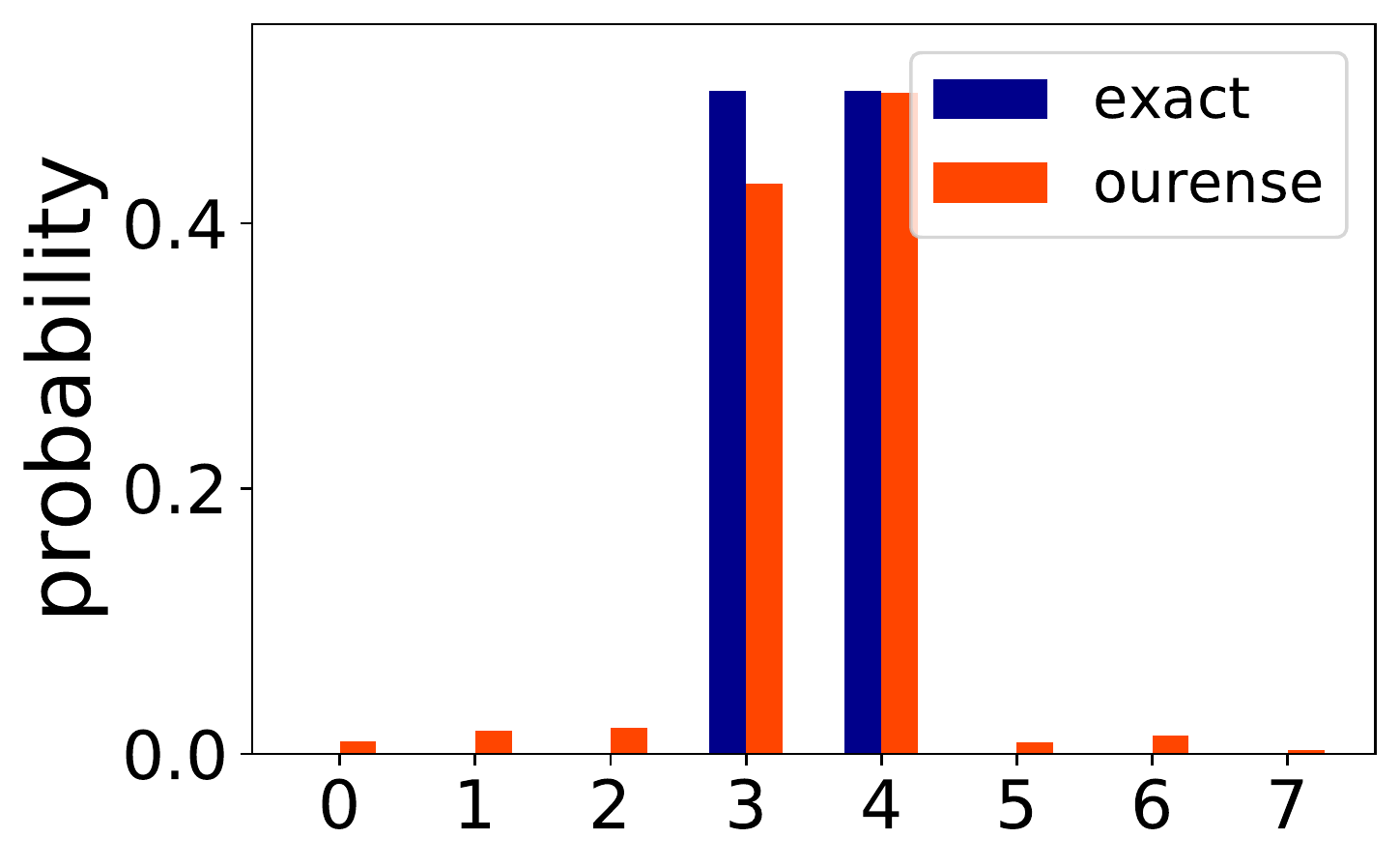}
\includegraphics[scale=0.33]{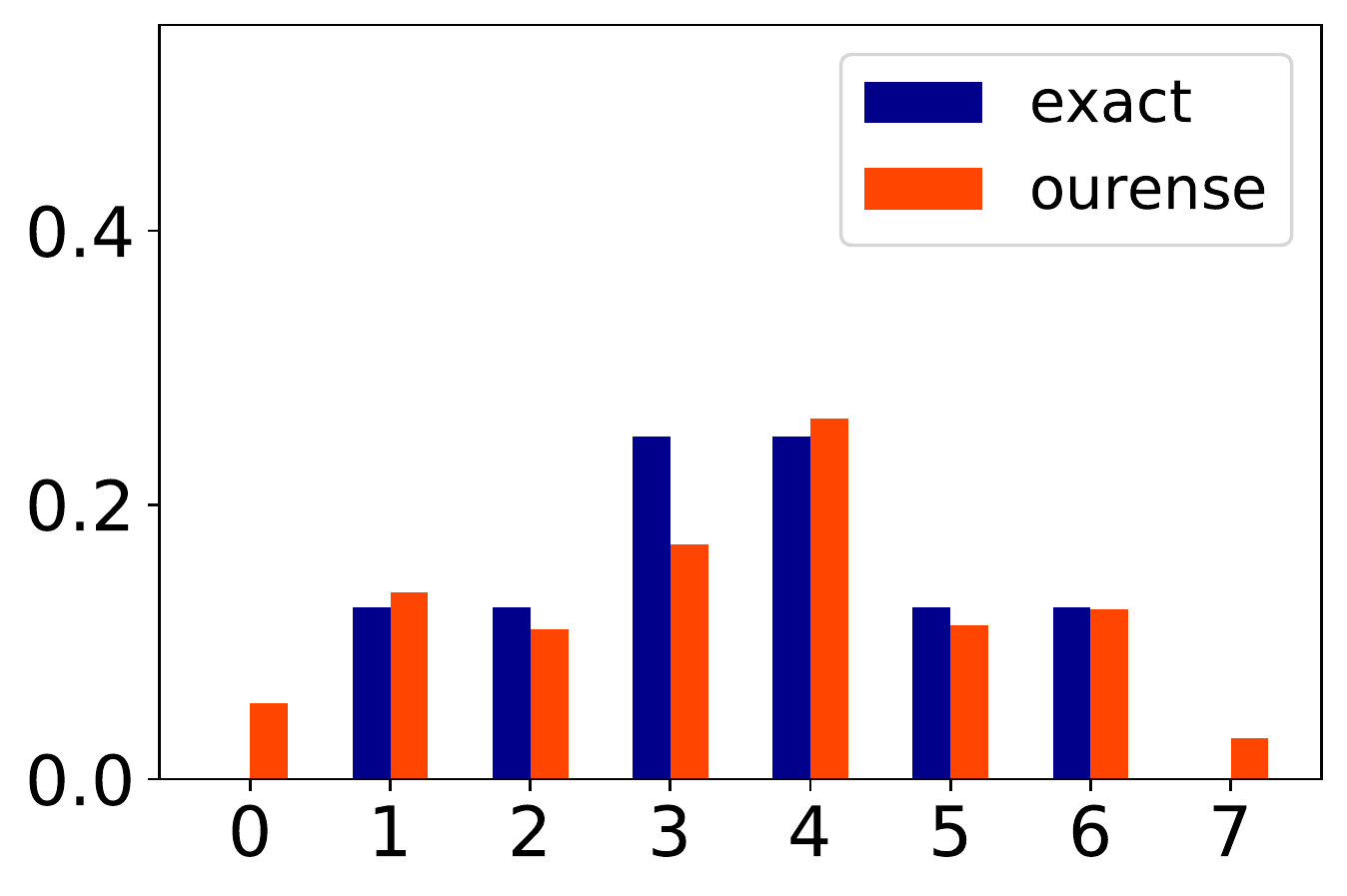}
\includegraphics[scale=0.33]{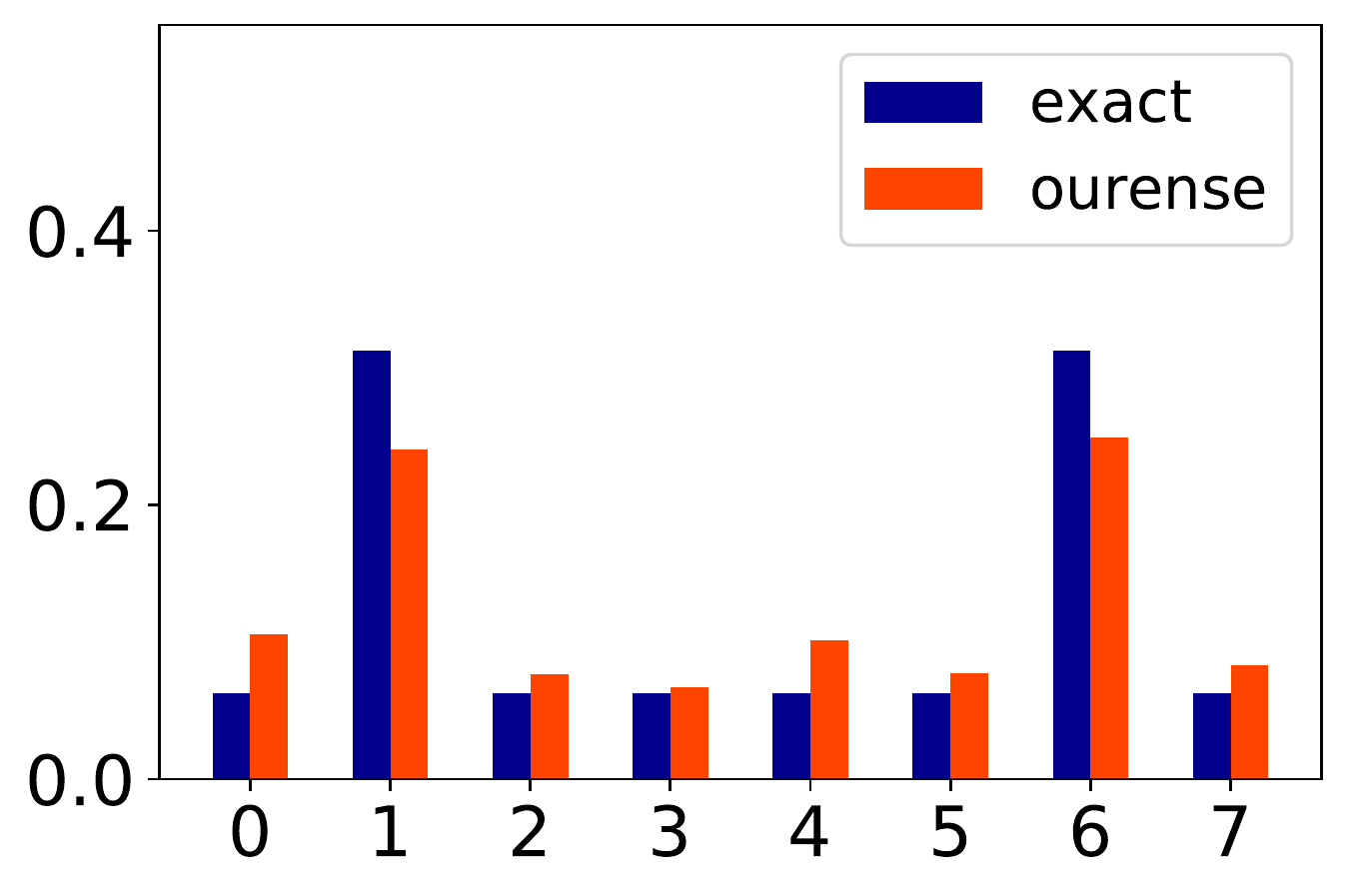}
\includegraphics[scale=0.33]{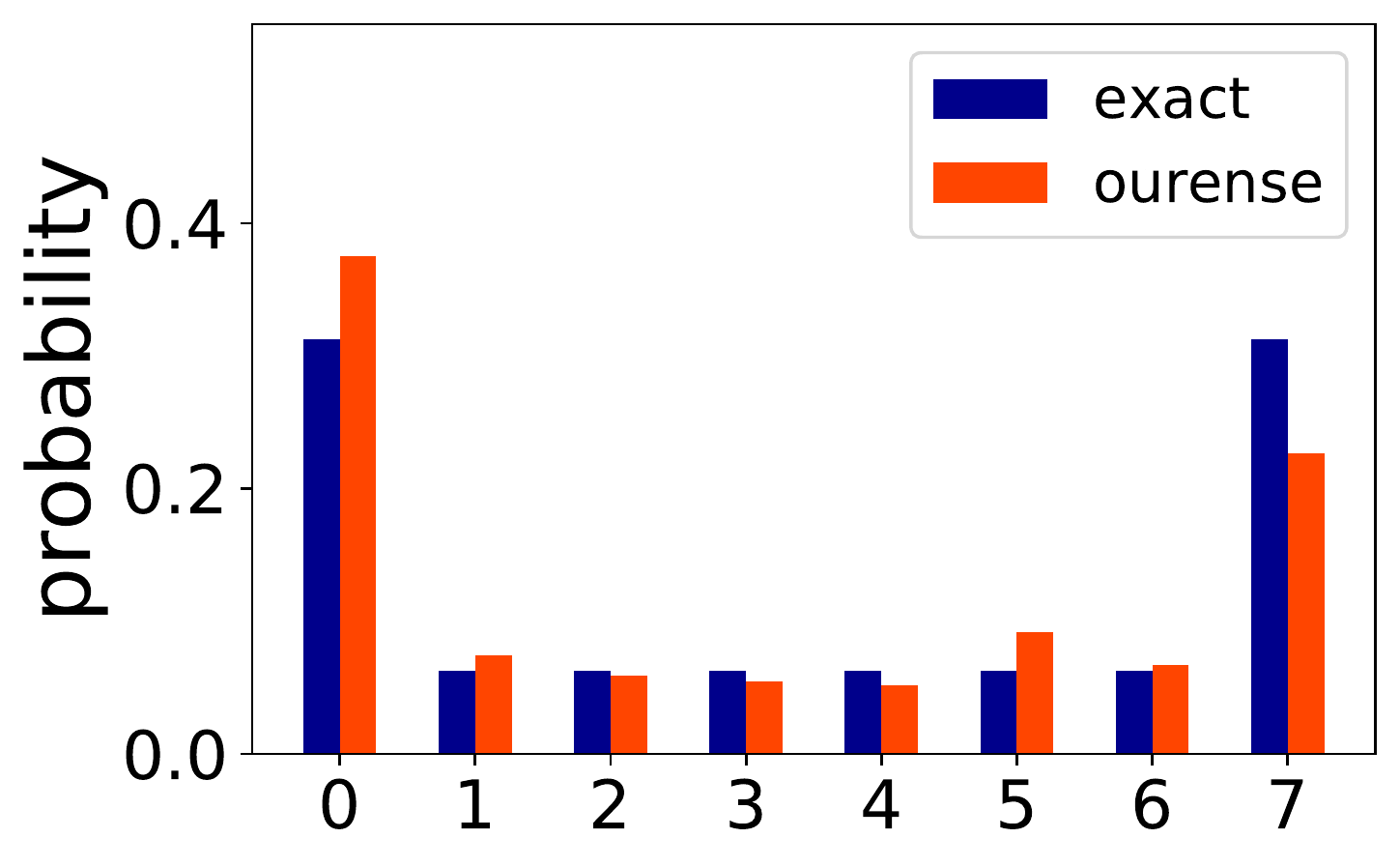}
\includegraphics[scale=0.33]{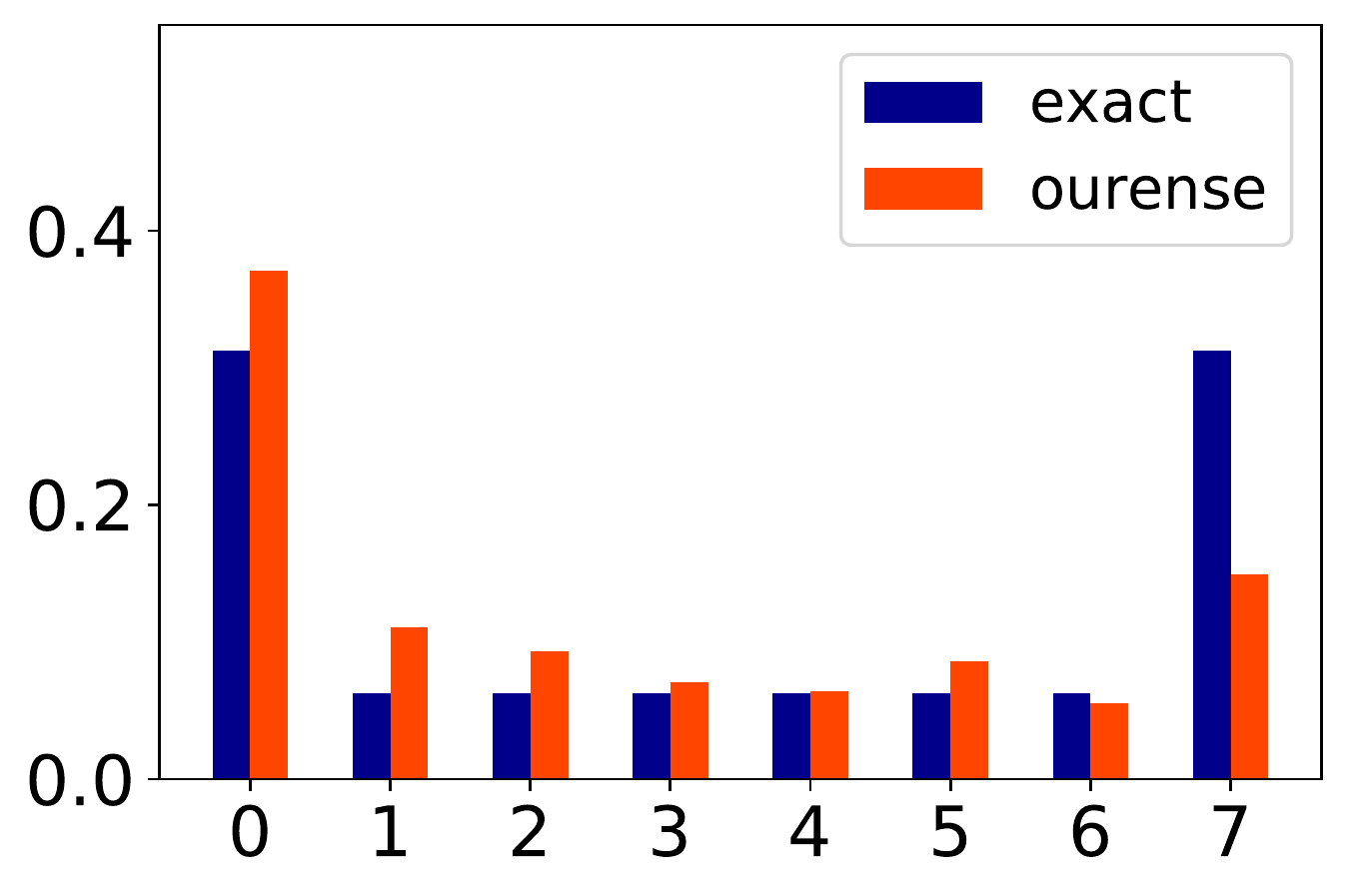}
\includegraphics[scale=0.33]{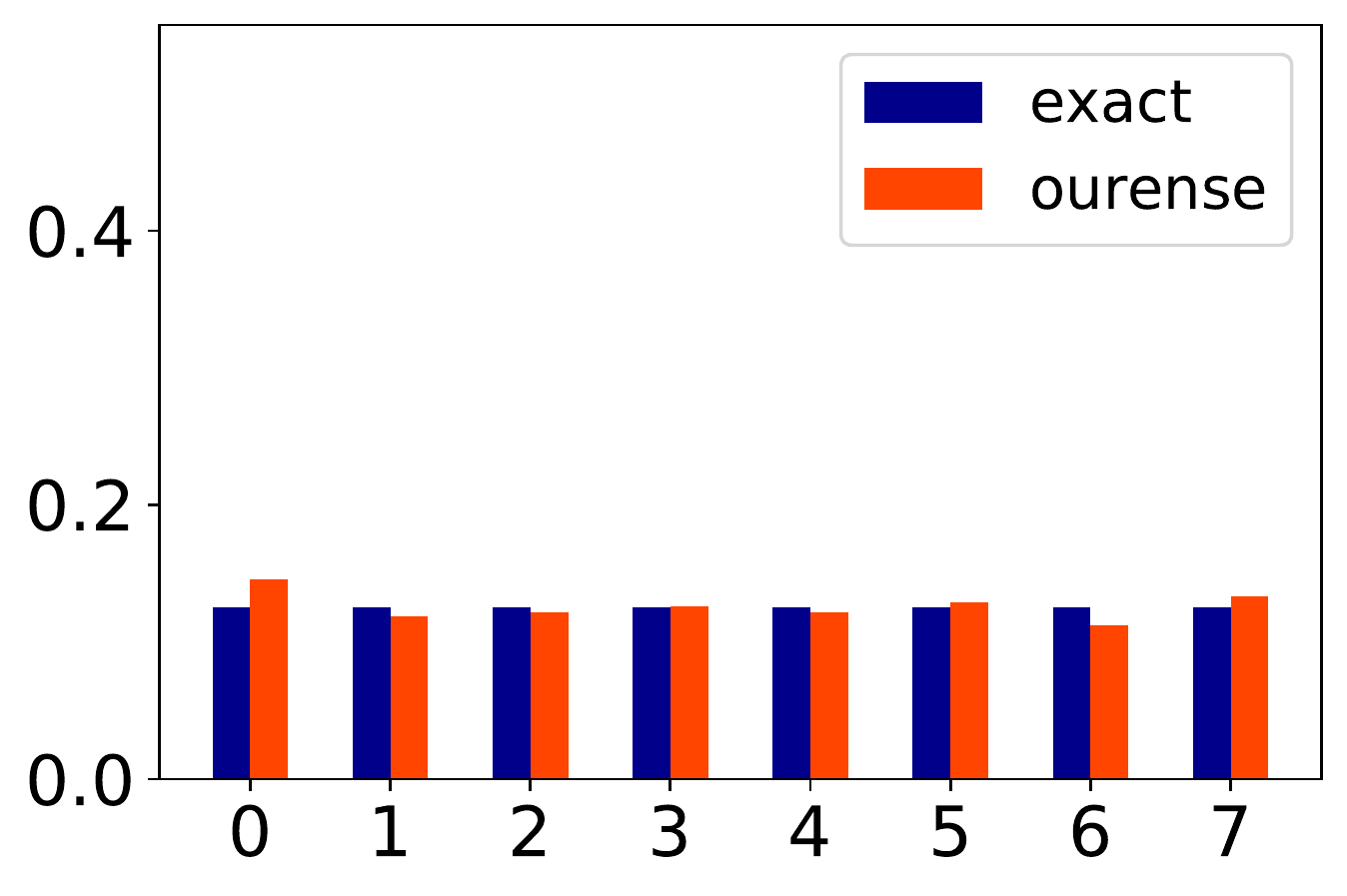}
\includegraphics[scale=0.33]{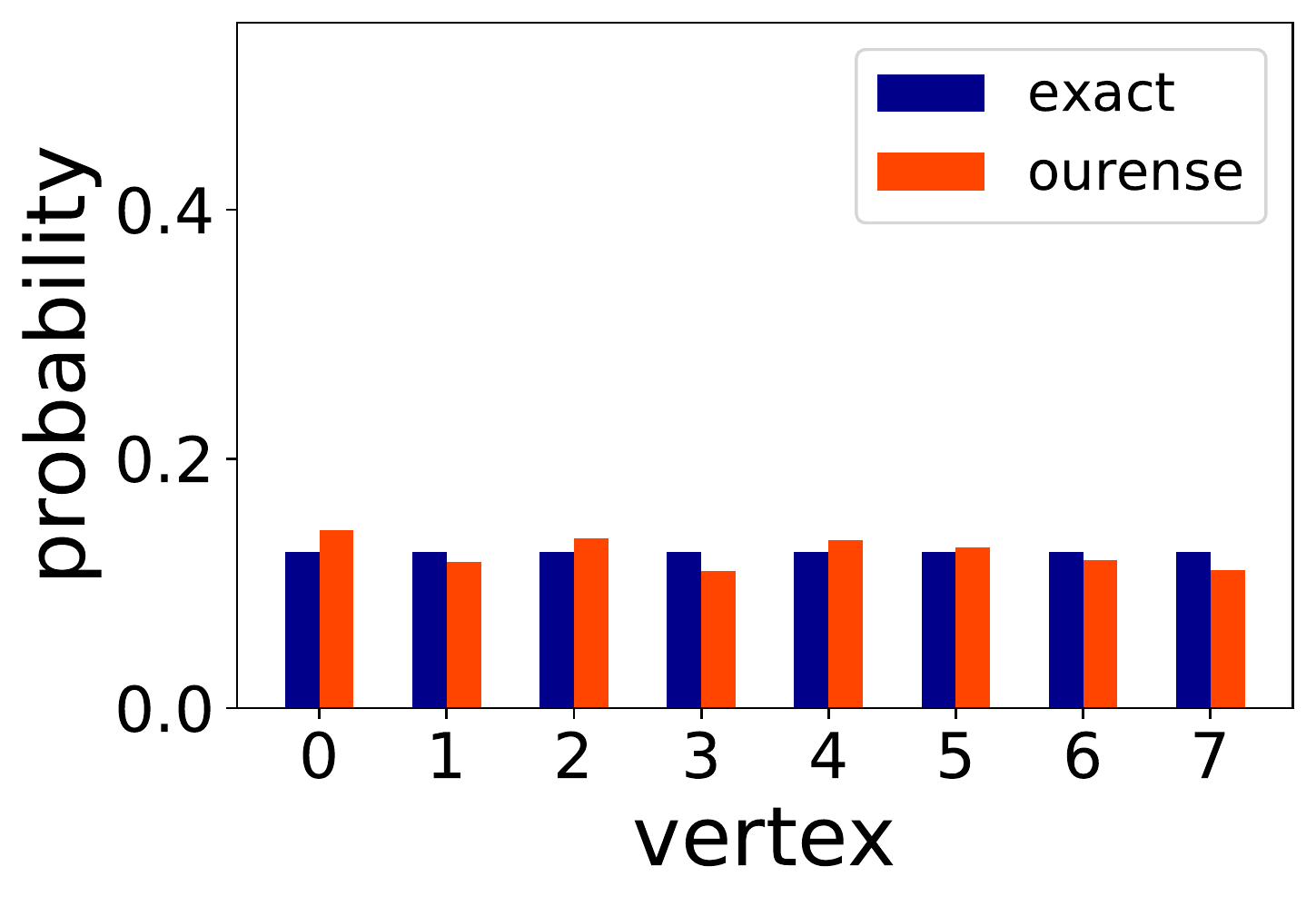}
\includegraphics[scale=0.33]{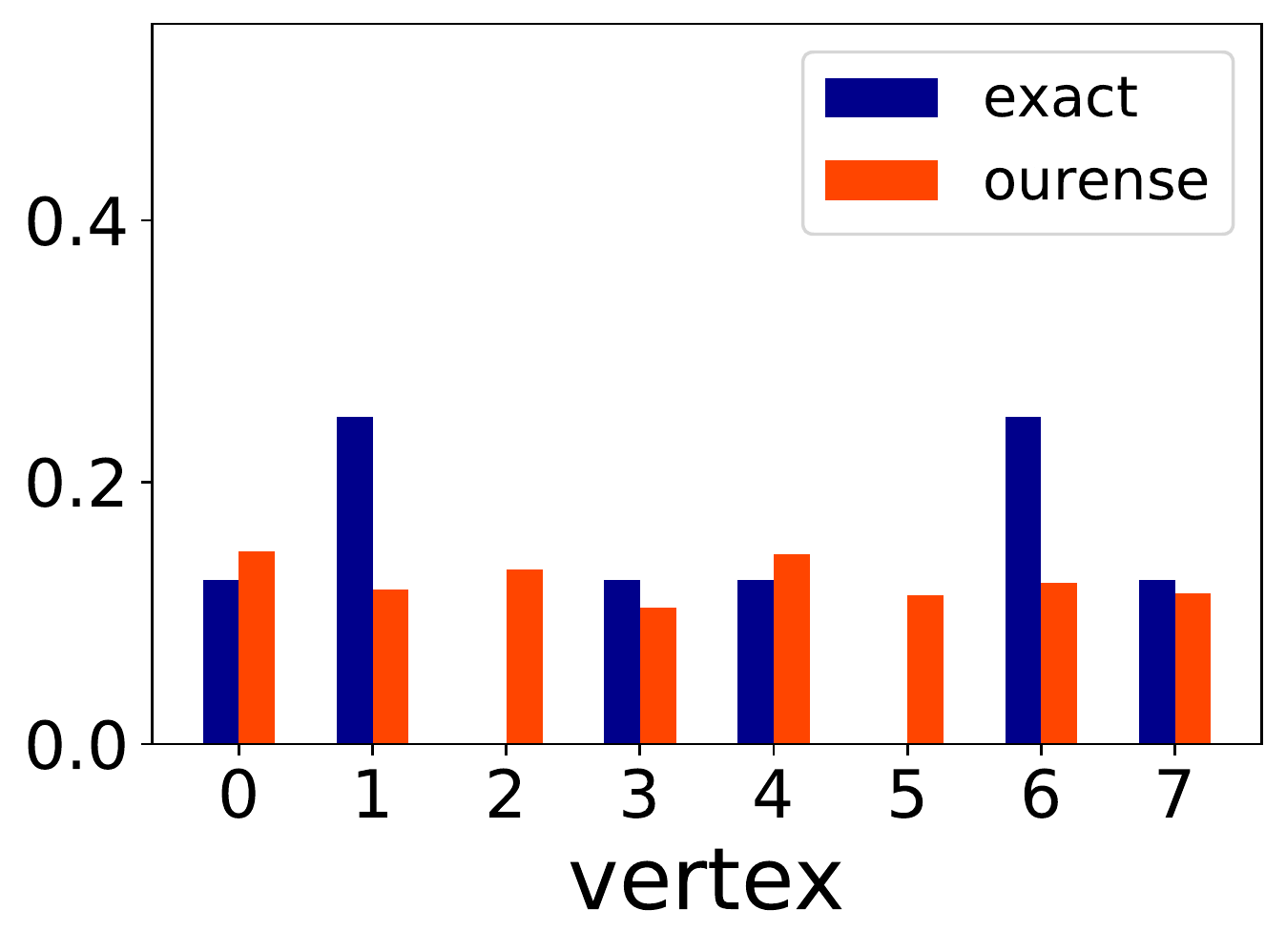}
\includegraphics[scale=0.33]{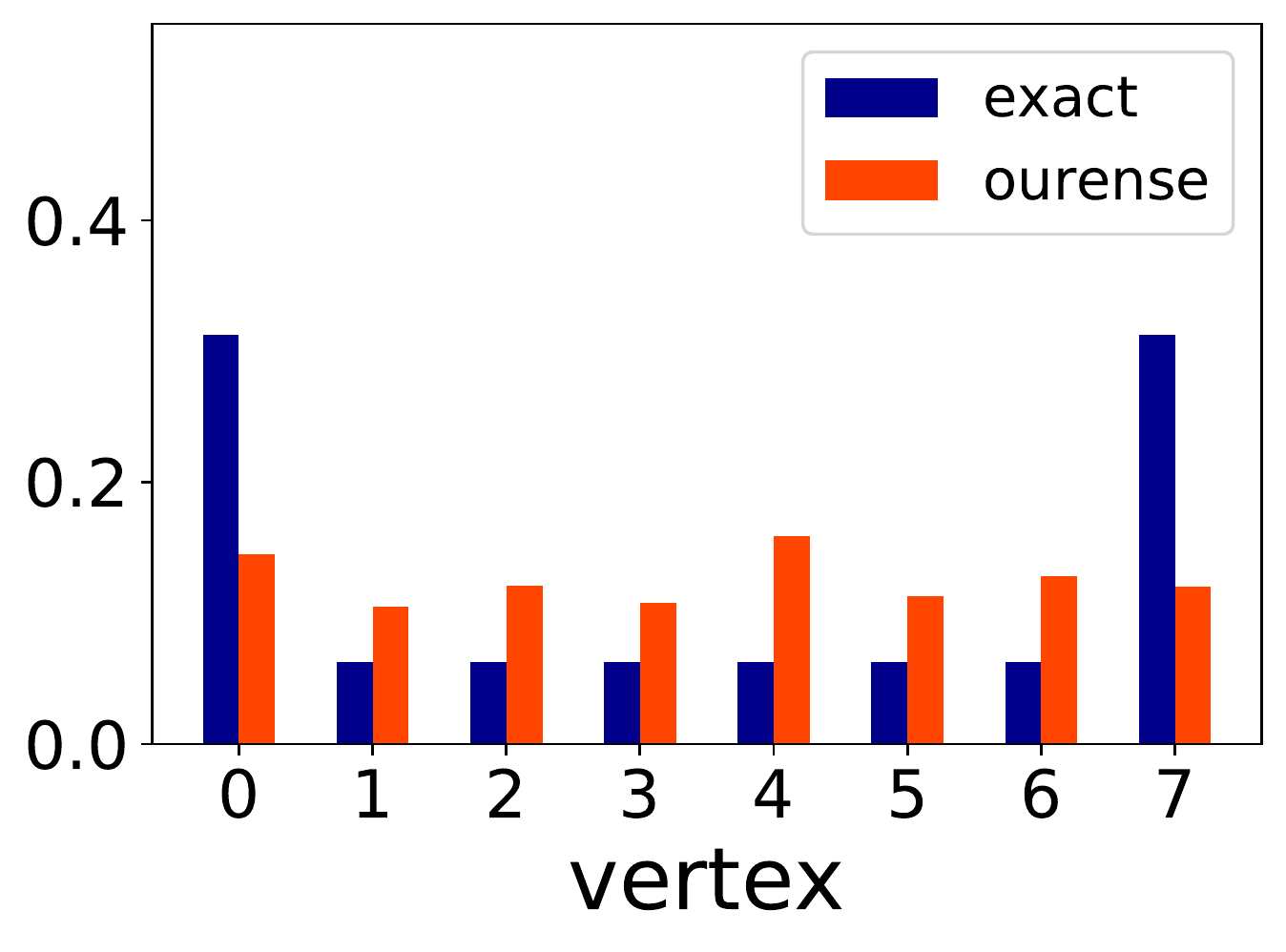}
\caption{Probability distribution of eight steps using exact calculations (blue) and the ourense quantum computer (red) using 3 qubits. The first plot refers to the preparation of the initial state (i.s.) $(\ket{3}+\ket{4})/\sqrt 2$. The following plots are successive steps.}
\label{fig:8-cycle}
\end{figure}

Table~\ref{table:fid3q} shows the corresponding fidelities, where $d$ and $h$ are the total variation and Hellinger distances, respectively. After the sixth step, the fidelity is high but the output of the quantum computer is worthless. This shows that the fidelity is not a good measure when the exact probability distribution is almost flat.

\begin{table}[h!]
\begin{center}
\begin{tabular}{|c|c|c|c|c|c|c|c|c|c|}
\hline
 fidelity & i.s. & step 1 & step 2 & step 3 & step 4 & step 5 & step 6 & step 7 & step 8 \\
\hline
 $1-d$ & 0.927  & 0.891 & 0.864 & 0.896 & 0.823 & 0.965 & 0.956 & 0.710 & 0.639 \\
\hline
 $1-h$ & 0.806  & 0.783 & 0.895 & 0.916 & 0.850 & 0.973 & 0.973 & 0.614 & 0.736 \\
\hline
\end{tabular}
\end{center}
\caption{Fidelities between the probability distributions generated by the ourense quantum computer and the exact simulation for one walker on a 8-vertex cycle up to step 8.}\label{table:fid3q}
\end{table}


\begin{thebibliography}{10}

\bibitem{MR96}
R.~Motwani and P.~Raghavan.
\newblock Randomized algorithms.
\newblock {\em ACM Comput. Surv.}, 28(1):33--37, 1996.

\bibitem{Amb07a}
A.~Ambainis.
\newblock Quantum walk algorithm for element distinctness.
\newblock {\em SIAM J. Comput.}, 37(1):210--239, 2007.

\bibitem{Por18book}
R.~Portugal.
\newblock {\em Quantum Walks and Search Algorithms}.
\newblock Springer, Cham, 2018.

\bibitem{GoogleQC}
F.~Arute et. al.
\newblock Quantum supremacy using a programmable superconducting processor.
\newblock {\em Nature}, 574:505–510, 2019.

\bibitem{DW09}
B.~L. Douglas and J.~B. Wang.
\newblock Efficient quantum circuit implementation of quantum walks.
\newblock {\em Phys. Rev. A}, 79:052335, May 2009.

\bibitem{Regensburger2011photon}
A.~Regensburger, C.~Bersch, B.~Hinrichs, G.~Onishchukov, A.~Schreiber,
  C.~Silberhorn, and U.~Peschel.
\newblock Photon propagation in a discrete fiber network: An interplay of
  coherence and losses.
\newblock {\em Phys. Rev. Lett.}, 107:233902, 2011.

\bibitem{Schetal12}
A.~Schreiber, A.~G{\'a}bris, P.~P. Rohde, K.~Laiho, M.~{\v S}tefa{\v n}{\'a}k,
  V.~Poto{\v c}ek, C.~Hamilton, I.~Jex, and C.~Silberhorn.
\newblock A {2D} quantum walk simulation of two-particle dynamics.
\newblock {\em Science}, 336(6077):55--58, 2012.

\bibitem{LMNPGBJS19}
L.~Lorz, E.~Meyer-Scott, T.~Nitsche, V.~Poto\v{c}ek, A.~G\'abris, S.~Barkhofen,
  I.~Jex, and C.~Silberhorn.
\newblock Photonic quantum walks with four-dimensional coins.
\newblock {\em Phys. Rev. Research}, 1:033036, 2019.

\bibitem{Preskill2018quantum}
John Preskill.
\newblock Quantum {C}omputing in the {NISQ} era and beyond.
\newblock {\em {Quantum}}, 2:79, 2018.

\bibitem{ADZ93}
Y.~Aharonov, L.~Davidovich, and N.~Zagury.
\newblock Quantum random walks.
\newblock {\em Phys. Rev. A}, 48(2):1687--1690, 1993.

\bibitem{FG98}
E.~Farhi and S.~Gutmann.
\newblock Quantum computation and decision trees.
\newblock {\em Phys. Rev. A}, 58:915--928, 1998.

\bibitem{Sze04a}
M.~Szegedy.
\newblock Quantum speed-up of {M}arkov chain based algorithms.
\newblock In {\em Proc. 45th Annual IEEE Symposium on Foundations of Computer
  Science}, FOCS '04, pages 32--41, Washington, 2004.

\bibitem{PRR05}
A.~Patel, K.~S. Raghunathan, and P.~Rungta.
\newblock Quantum random walks do not need a coin toss.
\newblock {\em Phys. Rev. A}, 71:032347, 2005.

\bibitem{PSFG16}
R.~Portugal, R.~A.~M. Santos, T.~D. Fernandes, and D.~N. Gon{\c{c}}alves.
\newblock The staggered quantum walk model.
\newblock {\em Quantum Inf. Process.}, 15(1):85--101, 2016.

\bibitem{ACFKMPP20}
A.~Abreu, L.~Cunha, C.~de~Figueiredo, L.~Kowada, F.~Marquezino, D.~Posner, and
  R.~Portugal.
\newblock The graph tessellation cover number: Chromatic bounds, efficient
  algorithms and hardness.
\newblock {\em Theoretical Computer Science}, 801:175 -- 191, 2020.

\bibitem{PF17}
R.~Portugal and T.~D. Fernandes.
\newblock Quantum search on the two-dimensional lattice using the staggered
  model with {H}amiltonians.
\newblock {\em Phys. Rev. A}, 95:042341, 2017.

\bibitem{HLZWW17}
Yong He, Ming-Xing Luo, E.~Zhang, Hong-Ke Wang, and Xiao-Feng Wang.
\newblock Decompositions of $n$-qubit {T}offoli gates with linear circuit
  complexity.
\newblock {\em International Journal of Theoretical Physics}, 56(7):2350--2361,
  Jul 2017.

\bibitem{LL16}
Ming-Xing Luo and Hui-Ran Li.
\newblock Comment on ``{L}inear-depth quantum circuits for $n$-qubit toffoli
  gates with no ancilla''.
\newblock {\em Phys. Rev. A}, 94:026301, 2016.

\bibitem{NC00}
M.~A. Nielsen and I.~L. Chuang.
\newblock {\em Quantum computation and quantum information}.
\newblock Cambridge University Press, New York, 2000.

\bibitem{BCS18}
R.~Balu, D.~Castillo, and G.~Siopsis.
\newblock Physical realization of topological quantum walks on {IBM}-{Q} and
  beyond.
\newblock {\em Quantum Science and Technology}, 3(3):035001, 2018.

\bibitem{GZ19}
K.~Georgopoulos and P.~Zuliani.
\newblock One-dimensional {H}adamard quantum walk on a cycle with rotational
  implementation.
\newblock {\em Arxiv:1911.00305}, 2019.

\bibitem{Sha19}
A.~Shakeel.
\newblock Efficient and scalable quantum walk algorithms via the quantum
  {F}ourier transform.
\newblock {\em Arxiv:1912.00978}, 2019.

\bibitem{matjeschk2012experimental}
R.~Matjeschk, Ch. Schneider, M.~Enderlein, T.~Huber, H.~Schmitz, J.~Glueckert,
  and T.~Schaetz.
\newblock Experimental simulation and limitations of quantum walks with trapped
  ions.
\newblock {\em New Journal of Physics}, 14(3):035012, 2012.

\bibitem{flurin2017observing}
E.~Flurin, V.~V. Ramasesh, S.~Hacohen-Gourgy, L.~S. Martin, N.~Y. Yao, and
  I.~Siddiqi.
\newblock Observing topological invariants using quantum walks in
  superconducting circuits.
\newblock {\em Phys. Rev. X}, 7:031023, 2017.

\bibitem{dadras2019experimental}
S.~Dadras, A.~Gresch, C.~Groiseau, S.~Wimberger, and G.~S. Summy.
\newblock Experimental realization of a momentum-space quantum walk.
\newblock {\em Phys. Rev. A}, 99:043617, 2019.

\bibitem{SOM20}
T.~Satoh, Y.~Ohkura, and R.~V. Meter.
\newblock Subdivided phase oracle for {NISQ} search algorithms.
\newblock {\em Arxiv:2001.06575}, 2020.

\bibitem{POM17}
R.~Portugal, M.~C. de~Oliveira, and J.~K. Moqadam.
\newblock Staggered quantum walks with {H}amiltonians.
\newblock {\em Phys. Rev. A}, 95:012328, 2017.

\bibitem{KRBD10}
T.~Kitagawa, M.~S. Rudner, E.~Berg, and E.~Demler.
\newblock Exploring topological phases with quantum walks.
\newblock {\em Phys. Rev. A}, 82:033429, 2010.

\bibitem{KBFRBKADW12}
T.~Kitagawa, {M. A.} Broome, A.~Fedrizzi, {M. S.} Rudner, E.~Berg, I.~Kassal,
  A.~Aspuru-Guzik, E.~Demler, and {A. G.} White.
\newblock Observation of topologically protected bound states in photonic
  quantum walks.
\newblock {\em Nature Communications}, 3, 2012.

\bibitem{moqadam2018boundary}
J.~Khatibi~Moqadam and A.~T. Rezakhani.
\newblock Boundary-induced coherence in the staggered quantum walk on different
  topologies.
\newblock {\em Phys. Rev. A}, 98:012123, 2018.

\bibitem{DMB11}
C.~Di~Franco, M.~Mc~Gettrick, and Th. Busch.
\newblock Mimicking the probability distribution of a two-dimensional {G}rover
  walk with a single-qubit coin.
\newblock {\em Phys. Rev. Lett.}, 106:080502, 2011.

\bibitem{Gro97}
L.~K. Grover.
\newblock Quantum mechanics helps in searching for a needle in a haystack.
\newblock {\em Phys. Rev. Lett.}, 79(2):325--328, 1997.

\end{thebibliography}

\end{document}